
\documentclass[journal,onecolumn,12pt]{IEEEtran}
\usepackage{pslatex}
\usepackage{amsfonts,color,morefloats}
\usepackage{amssymb,amsmath,latexsym,amsthm}
\usepackage{cases}
\usepackage[noadjust]{cite}

\newcommand{\fp}{{\mathbb F}_{p}}
\newcommand{\fq}{{\mathbb F}_{q}}
\newcommand{\ftwo}{{\mathbb F}_{2}}
\newcommand{\fqm}{{\mathbb F}_{q^m}}

\newcommand{\Tr}{{\rm {Tr}}}
\newcommand{\C}{{\mathcal{C}}}

\newcommand{\Supp}{{{\rm Supp}}}

\newtheorem{thm}{Theorem}
\newtheorem{lem}{Lemma}

\newtheorem{example}{Example}
\newtheorem{remark}{Remark}

\setcounter{page}{1}

\begin{document}
\title{Optimal linear codes with few weights from simplicial complexes}
\date{\today}

\author{
Bing Chen, Yunge Xu, Zhao Hu, Nian Li, Xiangyong Zeng
\thanks{B. Chen, Y. Xu and X. Zeng are with the Hubei Key Laboratory of Applied Mathematics, Faculty of Mathematics and Statistics, Hubei University, Wuhan, 430062, China. Z. Hu and N. Li are with the Hubei Provincial Engineering Research Center of Intelligent Connected Vehicle Network Security, School of Cyber Science and Technology, Hubei University, Wuhan, 430062, China. Email: chenbing@hubu.edu.cn, xuy@hubu.edu.cn, zhao.hu@aliyun.com, nian.li@hubu.edu.cn, xiangyongzeng@aliyun.com}
}
\maketitle

\begin{abstract}
Recently, constructions of optimal linear codes from simplicial complexes have attracted much attention and some related nice works were presented. Let $q$ be a prime power. In this paper, by using the simplicial complexes of $\fq^m$ with one single maximal element, we construct four families of linear codes over the ring $\fq+u\fq$ ($u^2=0$), which generalizes the results of [IEEE Trans. Inf. Theory 66(6):3657-3663, 2020]. The parameters and Lee weight distributions of these four families of codes are completely determined. Most notably, via the Gray map, we obtain several classes of optimal linear codes over $\fq$, including (near) Griesmer codes and distance-optimal codes. 
\end{abstract}

\begin{IEEEkeywords}
Optimal linear code, Simplicial Complex, Lee weight distribution, Code over ring
\end{IEEEkeywords}

\section{Introduction}
Let $\mathbb{F}_{q^m}$ be the finite field with $q^m$ elements and $\mathbb{F}_{q^m}^{*}=\mathbb{F}_{q^m}\backslash\{0\}$, where $q$ is a power of a prime $p$ and $m$ is a positive integer. An $[n, k, d]$ linear code $\mathcal{C}$ over $\fq$ is a $k$-dimensional subspace of $\fq^{n}$ with minimum Hamming distance $d$. Let $A_{i}$ denote the number of codewords with Hamming weight $i$ in a code $\mathcal{C}$ of length $n$. The weight enumerator of $\mathcal{C}$ is defined by
$1+A_{1}z+A_{2}z^{2}+\cdots +A_{n}z^{n}$. The sequence $(1, A_{1}, A_{2}, \cdots ,A_{n})$ is called the weight distribution of $\mathcal{C}$.
A code is said to be a $t$-weight code if the number of nonzero $A_{i}$ in the sequence $(A_{1}, A_{2}, \cdots ,A_{n})$ is equal to $t$. Linear codes with few weights have applications in secret sharing schemes \cite{ARJD,CCDY}, authentication codes \cite{DCHT,DW}, association schemes \cite{CAGJ}, strongly regular graphs and some other fields.

An $[n,k,d]$ linear code $\mathcal{C}$ over $\fq$ is said to be distance-optimal if no $[n,k,d+1]$ code exists (i.e., this code has the largest minimum distance for given length $n$ and dimension $k$) and it is called almost distance-optimal if there exists an $[n,k,d+1]$ distance-optimal code. An $[n,k,d]$ linear code $\mathcal{C}$ is called optimal (resp. almost optimal) if its parameters $n$, $k$ and $d$ (resp. $d+1$) meet a bound on linear codes with equality \cite{HWPV}. The Griesmer bound \cite{JHG,GSJS} for an $[n,k,d]$ linear code $\C$ over $\fq$ is given by
\[ n\geq  g(k,d):=\sum_{i=0}^{k-1} \lceil \frac{d}{q^i}\rceil,\]
where $\lceil \cdot \rceil$ denotes the ceiling function. An $[n,k,d]$ linear code $\mathcal{C}$ is called a Griesmer code (resp. near Griesmer code) if its parameters $n$ (resp. $n-1$), $k$ and $d$ achieve the Griesmer bound.

In 2007, Ding and Niederreiter \cite{DN} introduced a nice and generic way to construct linear codes via trace functions. Let $D \subset \fqm$ and define
\begin{equation*} \label{CD}
\C_D=\{c_{a}=(\Tr_{q}^{q^{m}}(ax))_{x\in D}: a\in \fqm\},
\end{equation*}
where $\Tr_{q}^{q^{m}}(\cdot)$ is the trace function from $\fqm$ to $\fq$. Then $\C_{D}$ is a linear code of length $n:=|D|$ over $\mathbb{F}_{q}$ and the set $D$ is called the defining set of $\C_D$. Many attempts have been made in this direction by selecting proper defining set to obtain good or optimal linear codes, see, for example, \cite{DING,HZHZ,HDWZ,JYHLL,ZZNL,MQRT,MAS,LXYQ} and references therein.

Let $R$ be a finite commutative ring and $R_{m}$ be an extension of $R$ of degree $m$. A trace code over $R$ with defining set $L \subset R_{m}^{*}$ is defined by
\begin{equation} \label{CL}
\C_L=\{c_{a}=(\Tr(ax))_{x\in L}: a\in R_{m}\}
\end{equation}
where $R_{m}^{*}$ is the multiplicative group of units of $R_{m}$ and $\Tr(\cdot)$ is a linear function from $R_{m}$ to $R$.
Using the construction above, some good linear codes over rings were contructed in the previous works, see, for example, \cite{MSYLPS,SMWRLY,MSYGPS,HLYM,YWXZQY,HCLZ} and references therein.

Recently, constructing optimal or good linear codes from simplicial complexes has attracted much attention from researchers. Some optimal linear codes over the finite field $\fq$ or the ring $R$ has been constructed in this direction. For the results on constructing linear codes over finite fields from simplicial complexes, we refer the readers to \cite{CHJY,JYHLL,HXLZWT} and references therein. As for linear codes over the ring $R$, to the best of my knowledge, Wu et al. (in 2020) were the first to construct linear codes over $R=\ftwo+u\ftwo$ ($u^2=0$) from simplicial complexes of $\ftwo^m$ with one maximal element, from which two classes of optimal few-weight binary codes were obtained via the Gray map. Later, Wu et al. \cite{WYHJ} constructed linear codes over $\fp+u\fp$ ($u^2=0$ and $p$ is a prime) from simplcial complexes of $\fp^m$ with one maximal element, and obtained $2(p-1)$ classes of $p$-ary distance-optimal linear codes. In 2021, Li et al. \cite{LXSM} constructed a family of linear codes over $R=\ftwo+u\ftwo+u^2\ftwo$ ($u^3=0$) from simplicial complexes of $\ftwo^m$ with one maximal element, from which a new family of optimal binary few-weight codes was derived. In the same year, Shi et al. \cite{SMLX} constructed two classes of linear codes over $\fp+u\fp$ ($u^2=u$) from simplcial complexes of $\fp^m$ with one maximal element, and two classes of distance-optimal $p$-ary linear codes were presented by using the Gray map. In 2024, Wu et al. \cite{WLZX} prsented interesting constructions of linear codes over ${\mathbb Z}_{4}$ by using the simplicial complexes of $\ftwo^m$ with one or two maximal elements.

In this paper, inspired by the previous works, we investigate the linear codes over the ring $R=\fq+u\fq$ ($u^2=0$) by employing the simplicial complexes of $\fq^m$. Let $\Delta_{A}$, $\Delta_{B}$ and $\Delta_{B'}$ be simplicial complexes of $\fq^m$ with one maximal element, where $A$ and $B$ are subsets of $[m]$ and $B'$ is a subset of $B$. We focus on the four families of linear codes $C_{L}$ over $R=\fq+u\fq$ defined by \eqref{CL}
with the following defining sets:
\begin{align} \label{defing-set-L}
  &1):\,\,L=\Delta_{A}+u(\Delta_{B}\setminus \Delta_{B'});   \nonumber\\
  &2):\,\,L=\Delta_{A}^c+u(\Delta_{B}\setminus \Delta_{B'});  \nonumber  \\
  &3):\,\,L=\Delta_{A}+u(\Delta_{B}\setminus \Delta_{B'})^c;  \nonumber\\
  &4):\,\,L=\Delta_{A}^c+u(\Delta_{B}\setminus \Delta_{B'})^c, 
\end{align}
where $\Delta^c$ denotes the complement of $\Delta$ for a set $\Delta$ of $\fqm$.
Notice that the codes $\C_{L}$ with defining sets $1)$ and $2)$ are reduced to the codes studied in \cite{YWXZQY} if $q=2$ ane $|B|=m$. By employing some detailed calculations on certain exponential sums, we obtained infinite families of few-weight linear codes with flexible parameters, and completely determined the parameters and Lee weight distributions of these four families of linear codes over $\fq+u\fq$. Most notably, under the Gray map $\phi $, several classes of optimal linear codes over $\fq$ are derived by characterizing the optimality of the Gray images $\phi(\C_{L})$ using the Griesmer bound, which include the (near) Griesmer codes and distance-optimal linear codes.

The rest of this paper is organized as follows. In Section \ref{sect-2}, we introduce some concepts and results. In Sections \ref{sect-3}, we determine the parameters and Lee weight distributions of these four infinite families of linear codes $\C_{L}$ over $\fq+u\fq$ with defining sets given by \eqref{defing-set-L}. In Section \ref{sect-4}, by employing the Gray map, we obtain several infinite families of optimal linear codes over $\fq$ and present some examples. Section \ref{sect-conclusion} concludes this paper.

\section{Preliminaries} \label{sect-2}
In this section, we introduce some notation, definitions and lemmas which will be used later.
Starting from now on, we adopt the following notation unless otherwise stated:
\begin{itemize}
  \item $q$ is a power of a prime $p$, and $m$ is a positive integer.
  \item Identify the vector space $\fq^m$ with the finite field $\fqm$ since  $\fq^m$ is isomorphic to $\fqm$.
  \item Let $R=\fq+u\fq$ with $u^2=0$.
  \item For a positive integer $e$, $[e]=\{1,\cdots,e\}$.
  \item For a set $S$, $|S|$ denotes the cardinality of $S$.
  \item For a vector $x\in \fq^m$, $wt(x)$ denotes the Hamming weight of $x$; for a vector $y \in R^n$, $wt_{L}(y)$ denotes the Lee weight of $y$.
  \item $\Tr_{q}^{q^{m}}(\cdot)$ denotes the trace function from $\fqm$ to $\fq$.
  \item $\chi(\cdot)$ denotes the canonical additive character of $\fq$, i.e. $\chi(\cdot)=\zeta_{p}^{\Tr_{p}^{q}(\cdot)}$, where $\zeta_{p}$ is a primitive complex $p$-th root of unity and $\Tr_{p}^{q}(\cdot)$ is the trace function from $\fq$ to $\fp$.
  \item $\Delta_{A}$ is a simplicial complex of $\fq^m$ with one maximal element, where $A\subseteq [m]$ is the support of the maximal element.
\end{itemize}

\subsection{The definition of simplicial complexes of $\fq^m$}

For two vectors $u=(u_{1},u_{2},\ldots,u_{m})$ and $v=(v_{1},v_{2},\ldots,v_{m})$ in $\fq^m$, we say that $u$ covers $v$, denoted $v\preceq u$, if $\Supp(v)\subseteq \Supp(u)$, where $\Supp(u)= \{1 \leq i \leq m : u_{i}\ne 0\}$ is the support of $u$. A subset $\Delta$ of $\fq^m$ is called a \textbf{simplicial complex} if $u\in \Delta$ and $v\preceq u$ imply $v\in\Delta$.

For a simplicial complex $\Delta \subseteq \fq^m$, an element $u$ in $\Delta$ with entries $0$ or $1$ is said to be maximal if there is no element $v\in \Delta$ such that $\Supp(u)$ is a proper subset of $\Supp(v)$. Let $\mathcal{F}=\{F_{1},F_{2},\ldots,F_{h}\}$ be the set of maximal elements of $\Delta$, where $h$ is the number of maximal elements in $\Delta$ and $F_{i}$'s are maximal elements of $\Delta$. Let $A_{i}=\Supp(F_{i})$ for $1\leq i \leq h$, which implies $A_{i}\subseteq [m]$. Let $\mathcal{A}=\{A_{1},A_{2},\ldots,A_{h}\}$ be the set of supports of maximal elements of $\Delta$, and $\mathcal{A}$ is said to be the support of $\Delta$, denoted $\Supp(\Delta)=\mathcal{A}$. Then one can see that a simplicial complex $\Delta$ is uniquely generated by $\mathcal{A}$, denoted $\Delta=\langle \mathcal{A}\rangle$.
Notice that both the set of maximal elements $\mathcal{F}$ and the support $\mathcal{A}$ of $\Delta$ are unique for a fixed simplicial complex $\Delta$.

It should be noted that the simplicial complex $\Delta_{A}=\langle \{A\} \rangle$ with exactly one maximal element is an $|A|$-dimensional subspace of $\fq^m$. Let $\{\alpha_{1},\ldots,\alpha_{m}\}$ be a basis of $\fqm$ over $\fq$. Then $\Delta_{A}$ can be viewed as an $|A|$-dimensional $\fq$-subspace of $\fqm$ spanned by the set $\{\alpha_{i}: i \in A\}$. Specially, when $A=\emptyset$ (i.e., $|A|=0$), we have $\Delta_{A}=\{0\}$. For more details on simplicial complexes, we refer the readers to \cite{CHJY,HXLZWT,YWXZQY}.

\subsection{Linear codes over the ring $R$}

Let $R=\fq+u\fq$ with $u^2=0$. A linear code $\C$ of length $n$ over $R$ is an $R$-submodule of $R^{n}$. For any $a+ub\in R$ where $a,b\in \fq$, the Gray map $\phi$ from $R$ to $\fq^2$ is defined by
\[\phi : R \rightarrow \fq^2, a+ub\mapsto (b,a+b).\]
Any vector $\mathbf{x}\in R^{n}$ can be written as $\mathbf{x}=\mathbf{a}+u\mathbf{b}$ where $\mathbf{a},\mathbf{b}\in\fq^{n}$. The map $\phi$ is a bijection, which can be extended naturally from $R^{n}$ to $\fq^{2n}$ as follows:
\[\phi : R^{n} \rightarrow \fq^{2n}, \mathbf{x}=\mathbf{a}+u\mathbf{b}\mapsto (\mathbf{b},\mathbf{a}+\mathbf{b}).\]
The Hamming weight $wt(\mathbf{a})$ of a vector $\mathbf{a}\in \fq^{n}$ is the number of nonzero coordinates in $\mathbf{a}$. The Lee weight $wt_{L}(\mathbf{a}+u\mathbf{b})$ of a vector $\mathbf{a}+u\mathbf{b} \in R^n$ is the Hamming weight of its Gray image $\phi(\mathbf{a}+u\mathbf{b})$ as follows:
\begin{equation*} \label{eq-wt}
wt_{L}(\mathbf{a}+u\mathbf{b})=wt(\mathbf{b})+wt(\mathbf{a}+\mathbf{b}).
\end{equation*}
The Lee distance of $\mathbf{x},\mathbf{y}\in R^{n}$ is defined as $wt_{L}(\mathbf{x}-\mathbf{y})$. It is easy to check that the Gray map is an isometry from $(R^{n},d_{L})$ and $(\fq^{2n},d_{H})$.

Let $\mathcal{R}=\fqm+u\fqm$ with $u^2=0$. Let $F$ be the Frobenius operator over $\mathcal{R}$ defined by $F(a+ub)=a^q+ub^q$. The trace function $\Tr(\cdot)$ is defined by
\[\Tr=\sum_{i=0}^{m-1}F^{i}: \mathcal{R} \rightarrow R, a+ub\mapsto \sum_{i=0}^{m-1}F^{i}(a+ub)=\sum_{i=0}^{m-1}(a^{q^i}+ub^{q^i}).\]
By the definition above, it can be readily verified that
\begin{equation*} \label{eq-Tr}
\Tr(a+ub)=\Tr_{q}^{q^m}(a)+u\Tr_{q}^{q^m}(b),
\end{equation*}
where $\Tr_{q}^{q^{m}}(\cdot)$ denotes the trace function from $\fqm$ to $\fq$.

With the discussion above, the Lee weight of the trace code $\C_{L}$ for a general defining set $L$ can be determined by the following lemma.

\begin{lem} (\cite{HCLZ}) \label{lem-wtL}
Let $L=L_{1}+uL_{2}$ where $L_{1}, L_{2}\in \fqm$. Then $\C_{L}$ is a code of length $|L|$ over $R$ and for any $a+ub \in \mathcal{R}\backslash \{0\}$, the Lee weight of the codeword $c_{a+ub}$ in $\C_{L}$ is $wt_{L}(c_{a+ub})=2|L|-\Omega$ where
\[\Omega=\frac{1}{q}\sum_{u\in \fq}\sum_{y \in L_{2}}\chi(u\Tr_{q}^{q^m}(ay))\sum_{x\in L_{1}}(\chi(u\Tr_{q}^{q^m}(bx))+\chi(u\Tr_{q}^{q^m}((a+b)x))).\]
\end{lem}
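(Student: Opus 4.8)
The plan is to compute the Lee weight of $c_{a+ub}$ directly from its Gray image and then re-package the resulting Hamming weights as an exponential sum. Write $L=L_1+uL_2$, so a typical element of $L$ has the form $x+uy$ with $x\in L_1$, $y\in L_2$. Using $\Tr((a+ub)(x+uy))=\Tr_{q}^{q^m}(ax)+u\bigl(\Tr_{q}^{q^m}(bx)+\Tr_{q}^{q^m}(ay)\bigr)$ (here $u^2=0$ kills the $\Tr_q^{q^m}(by)$ term), the codeword is $c_{a+ub}=\mathbf{a}'+u\mathbf{b}'$ with $\mathbf{a}'=(\Tr_q^{q^m}(ax))$ and $\mathbf{b}'=(\Tr_q^{q^m}(bx)+\Tr_q^{q^m}(ay))$, indexed by $x+uy\in L$. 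By the formula $wt_L(\mathbf{a}'+u\mathbf{b}')=wt(\mathbf{b}')+wt(\mathbf{a}'+\mathbf{b}')$ recalled just before the lemma, we get
\[
wt_L(c_{a+ub})=wt\bigl((\Tr_q^{q^m}(bx)+\Tr_q^{q^m}(ay))_{x+uy\in L}\bigr)+wt\bigl((\Tr_q^{q^m}((a+b)x)+\Tr_q^{q^m}(ay))_{x+uy\in L}\bigr).
\]

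Next I would convert each Hamming weight into a character sum via the standard indicator identity $\frac1q\sum_{v\in\fq}\chi(vt)=1$ if $t=0$ and $0$ otherwise, so that for a length-$N$ vector $\mathbf{w}=(w_j)$ one has $wt(\mathbf{w})=N-\frac1q\sum_{v\in\fq}\sum_j\chi(vw_j)$. Applying this to both pieces above, with $N=|L|$ in each and the index running over all pairs with $x+uy\in L$ (so the sum over the index set factors appropriately once one notes that $L=L_1+uL_2$ forces $x\in L_1$, $y\in L_2$ to range independently), the two "$N$" terms add to $2|L|$, and the two character-sum terms combine into
\[
\frac1q\sum_{v\in\fq}\sum_{y\in L_2}\chi(v\Tr_q^{q^m}(ay))\sum_{x\in L_1}\bigl(\chi(v\Tr_q^{q^m}(bx))+\chi(v\Tr_q^{q^m}((a+b)x))\bigr),
\]
which is exactly $\Omega$ (the bound variable being renamed from $u$ to $v$ to avoid clashing with the ring element $u$, as in the statement). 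Hence $wt_L(c_{a+ub})=2|L|-\Omega$, and the length claim $|L_1+uL_2|=|L|$ is immediate from the definition of the defining set.

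The only genuinely delicate point is bookkeeping about the index set: one must be careful that $L=L_1+uL_2$ really means the index runs over $|L_1|\cdot|L_2|$ pairs (equivalently, that the representation $x+uy$ with $x\in L_1,y\in L_2$ is unique), so that summing $\chi(\cdot)$ over $x+uy\in L$ legitimately splits as $\sum_{y\in L_2}\sum_{x\in L_1}$; this is where the structure of $\mathcal R=\fqm+u\fqm$ with $u^2=0$ is used, and it is also implicitly how $|L|$ should be read. Everything else is the routine indicator-function manipulation above; since the lemma is quoted from \cite{HCLZ}, I would simply present this computation as the verification.
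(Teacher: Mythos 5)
Your proof is correct and is precisely the standard argument: expand $\Tr((a+ub)(x+uy))$ using $u^2=0$, apply the Gray map to reduce the Lee weight to two Hamming weights, and convert each to a character sum via $\frac1q\sum_{v\in\fq}\chi(vt)=[t=0]$. The paper itself gives no proof of this lemma (it is quoted from the reference \cite{HCLZ}), so there is nothing to contrast with; your verification, including the remark that $L=L_1+uL_2$ indexes $|L_1|\cdot|L_2|$ coordinates because the representation $x+uy$ is unique in $\mathcal{R}$, is exactly the computation one would expect, and it correctly reads the paper's ``$L_1,L_2\in\fqm$'' as $L_1,L_2\subseteq\fqm$.
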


\subsection{Useful auxiliary results}
For an $r$-dimensional $\fq$-subspace $H$ of $\fqm$, the dual of $H$ is defined by
\[H^{\bot}=\{v\in \fqm: \Tr_{q}^{q^m}(uv)=0 \mbox{ for all } u\in H\}.\]
The dual $H^{\bot}$ is an $(m-r)$-dimensional $\fq$-subspace of $\fqm$. Let $\{\alpha_{1},\ldots,\alpha_{m}\}$ be a basis of $\fqm$ over $\fq$ and $\{\beta_{1},\ldots,\beta_{m}\}$ be its dual basis. For the simplicial complex $\Delta_{A}$, its dual $\Delta_{A}^{\bot}$ is the $m-|A|$-dimensional $\fq$-subspace of $\fqm$ spanned by the set $\{\beta_{j}: j \in [m]\setminus A\}$.

By using the relevant result in \cite{YLFL}, we can give the following lemma regarding the exponential sum on the simplicial complexex $\Delta_{A}$ of $\fq^m$ with exactly one maximal element, which will be helpful to prove our main results.

\begin{lem} \label{lem-expsum-H}
Let $\Delta_{A}=\langle \{A\} \rangle$ be the simplicial complex of $\fqm$ with exactly one maximal element. Then for $y\in \fqm^*$ we have
\begin{eqnarray*}
\sum_{x\in \Delta_{A}}\zeta_{p}^{\Tr_{p}^{q^m}(yx)}=\left\{\begin{array}{ll}
q^{|A|},    &   \mbox{ if } y\in \Delta_{A}^{\bot};\\
0,   &   \mbox{ otherwise}.
\end{array} \right.
\end{eqnarray*}
\end{lem}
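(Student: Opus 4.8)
The plan is to exploit the fact, recorded just above, that $\Delta_{A}=\langle\{A\}\rangle$ is an $|A|$-dimensional $\fq$-subspace of $\fqm$ — namely the span of $\{\alpha_{i}:i\in A\}$ — together with the elementary behaviour of additive characters on subspaces. First I would rewrite the summand in terms of the canonical additive character of $\fq$: by transitivity of the trace, $\Tr_{p}^{q^m}(yx)=\Tr_{p}^{q}(\Tr_{q}^{q^m}(yx))$, so $\zeta_{p}^{\Tr_{p}^{q^m}(yx)}=\chi(\Tr_{q}^{q^m}(yx))$, and the sum in question becomes $\sum_{x\in\Delta_{A}}\chi(\Tr_{q}^{q^m}(yx))$.

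Next I would view the map $\ell_{y}:\Delta_{A}\to\fq$, $x\mapsto\Tr_{q}^{q^m}(yx)$, as an $\fq$-linear functional on the $\fq$-vector space $\Delta_{A}$, and split into two cases. If $\ell_{y}$ vanishes identically on $\Delta_{A}$, then every summand equals $\chi(0)=1$ and the sum is $|\Delta_{A}|=q^{|A|}$. If $\ell_{y}$ is not identically zero, then it is surjective onto $\fq$ and each element of $\fq$ has exactly $q^{|A|-1}$ preimages in $\Delta_{A}$; hence the sum equals $q^{|A|-1}\sum_{c\in\fq}\chi(c)=0$, using the orthogonality relation $\sum_{c\in\fq}\chi(c)=0$ for the nontrivial character $\chi$. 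Finally I would identify this dichotomy with membership in the dual: $\ell_{y}\equiv 0$ on $\Delta_{A}$ means precisely $\Tr_{q}^{q^m}(xy)=0$ for all $x\in\Delta_{A}$, which by the definition of $\Delta_{A}^{\bot}$ recalled above is exactly $y\in\Delta_{A}^{\bot}$. Combining the two cases yields the claimed formula. Alternatively, one can simply quote the general evaluation of exponential sums over $\fq$-subspaces from \cite{YLFL}, with $\Delta_{A}$ playing the role of the subspace; the argument above is just the specialization of that result.

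As for the main obstacle: there is none of real substance here. The only points requiring a moment's care are the reduction $\Tr_{p}^{q^m}=\Tr_{p}^{q}\circ\Tr_{q}^{q^m}$ and the observation that a nonzero $\fq$-linear functional on $\Delta_{A}$ is balanced over $\fq$; both are standard. The lemma is essentially a bookkeeping step that packages the character-sum evaluation to be invoked repeatedly in Section \ref{sect-3}.
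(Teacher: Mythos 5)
Your proof is correct, and it is the standard character-sum-over-a-subspace argument. The paper does not actually prove this lemma at all --- it only cites the relevant result of \cite{YLFL} --- so your self-contained derivation (linear functional $\ell_{y}$ on the $\fq$-subspace $\Delta_{A}$, either identically zero or balanced over $\fq$, with the dichotomy matching $y\in\Delta_{A}^{\bot}$ by definition of the dual) simply supplies the specialization that the paper leaves to the reference; nothing is missing.
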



\section{Four families of linear codes over $\fq+u\fq$} \label{sect-3}
In this section, we construct four families of linear codes over $R=\fq+u\fq$ with $u^2=0$ by employing the simplicial complexes of $\fq^m$ with one maximal element. With detailed computation on some exponential sums, the parameters and the Lee weight distributions of these codes are completely determined.

\subsection{The first class of linear codes $\C_{L}$ with $L=\Delta_{A}+u(\Delta_{B}\setminus \Delta_{B'})$}
\begin{thm} \label{code-1}
Let $m$ be a positive integer. Let $\Delta_{A}$, $\Delta_{B}$ and $\Delta_{B'}$ be simplicial complexes of $\fqm$, where $A\subseteq [m]$ and $B' \subset B\subseteq [m]$. Assume that $|A|+|B'|> 0$. Denote $L=\Delta_{A}+u(\Delta_{B}\setminus \Delta_{B'})$. Then $\C_{L}$ defined by \eqref{CL} is a $4$-weight code of length $q^{|A|}(q^{|B|}-q^{|B'|})$, size $q^{|A|+|A\cup B|}$, and its Lee weight distribution is given by
\begin{center}
\begin{tabular}{llll}
\hline Weight $w$ & Multiplicity $A_{w}$\\ \hline
$0$ & $1$ \\
$2(q-1)q^{|A|+|B|-1}$               & $q^{|A\cup B|-|A\cup B'|}-1$ \\
$2(q-1)q^{|A|-1}(q^{|B|}-q^{|B'|})$ & $q^{|A|+|A\cup B|}-2q^{|A\cup B|-|B'|}+q^{|A\cup B|-|A\cup B'|}$ \\ 
$(q-1)q^{|A|-1}(q^{|B|}-q^{|B'|})$ & $2(q^{|A\cup B|-|B|}-1)$ \\
$(q-1)q^{|A|-1}(2q^{|B|}-q^{|B'|})$ & $2(q^{|A\cup B|-|B'|}-q^{|A\cup B|-|B|}-q^{|A\cup B|-|A\cup B'|}+1)$ \\
\hline
\end{tabular}
\end{center}
\end{thm}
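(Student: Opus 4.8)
The plan is to apply Lemma~\ref{lem-wtL} with $L_1=\Delta_A$ and $L_2=\Delta_B\setminus\Delta_{B'}$, to evaluate the resulting exponential sum by means of Lemma~\ref{lem-expsum-H}, and then to read off the weight distribution by partitioning the message space $\mathcal{R}\setminus\{0\}$ according to the values of the relevant character sums. For the length, each element of $L$ is uniquely $x_1+ux_2$ with $x_1\in\Delta_A$ and $x_2\in\Delta_B\setminus\Delta_{B'}$, so $|L|=q^{|A|}(q^{|B|}-q^{|B'|})$. For the size I would identify the kernel $K$ of the $\fq$-linear map $a+ub\mapsto c_{a+ub}$ on $\mathcal{R}$: from $\Tr\bigl((a+ub)(x_1+ux_2)\bigr)=\Tr_{q}^{q^{m}}(ax_1)+u\,\Tr_{q}^{q^{m}}(ax_2+bx_1)$ and the fact that $\Delta_B\setminus\Delta_{B'}$ spans $\Delta_B$ over $\fq$ (because $B'\subsetneq B$), one checks that $a+ub\in K$ exactly when $a\in\Delta_A^{\bot}\cap\Delta_B^{\bot}$ and $b\in\Delta_A^{\bot}$; since $\Delta_A+\Delta_B=\Delta_{A\cup B}$ gives $\Delta_A^{\bot}\cap\Delta_B^{\bot}=\Delta_{A\cup B}^{\bot}$, this yields $|K|=q^{2m-|A\cup B|-|A|}$ and $|\C_L|=q^{2m}/|K|=q^{|A|+|A\cup B|}$.

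Next I would obtain a closed form for the Lee weight. For $z\in\fqm$ set $S_1(z)=\sum_{x\in\Delta_A}\chi\bigl(\Tr_{q}^{q^{m}}(zx)\bigr)$ and $S_2(z)=\sum_{y\in\Delta_B\setminus\Delta_{B'}}\chi\bigl(\Tr_{q}^{q^{m}}(zy)\bigr)$. By transitivity of the trace these are exactly the sums of Lemma~\ref{lem-expsum-H}, so $S_1(z)=q^{|A|}$ if $z\in\Delta_A^{\bot}$ and $0$ otherwise, and, writing $S_2=\sum_{\Delta_B}-\sum_{\Delta_{B'}}$ and using $\Delta_B^{\bot}\subseteq\Delta_{B'}^{\bot}$, $S_2(z)$ equals $q^{|B|}-q^{|B'|}$, $-q^{|B'|}$, or $0$ according as $z\in\Delta_B^{\bot}$, $z\in\Delta_{B'}^{\bot}\setminus\Delta_B^{\bot}$, or $z\notin\Delta_{B'}^{\bot}$. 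Since $S_1$ and $S_2$ are constant on $\fq^{*}$-orbits, in the expression for $\Omega$ of Lemma~\ref{lem-wtL} the $q-1$ summands with running variable $\neq 0$ all equal $S_2(a)\bigl(S_1(b)+S_1(a+b)\bigr)$ while the zero term is $2q^{|A|}(q^{|B|}-q^{|B'|})$; combined with $wt_{L}(c_{a+ub})=2|L|-\Omega$ this gives, for every $a+ub\in\mathcal{R}\setminus\{0\}$,
\[ wt_{L}(c_{a+ub})=\frac{q-1}{q}\Bigl(2q^{|A|}(q^{|B|}-q^{|B'|})-S_2(a)\bigl(S_1(b)+S_1(a+b)\bigr)\Bigr). \]

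Finally I would tabulate this over all messages. Here $S_2(a)$ takes the three values above (depending on the position of $a$ in the chain $\Delta_B^{\bot}\subseteq\Delta_{B'}^{\bot}\subseteq\fqm$), and $S_1(b)+S_1(a+b)\in\{2q^{|A|},q^{|A|},0\}$, equal to $2q^{|A|}$ exactly when $a,b\in\Delta_A^{\bot}$, equal to $0$ when neither $b$ nor $a+b$ lies in $\Delta_A^{\bot}$, and equal to $q^{|A|}$ otherwise (which forces $a\notin\Delta_A^{\bot}$). Substituting the combinations into the displayed formula gives weight $0$ precisely on $K$ and the four pairwise distinct nonzero weights $2(q-1)q^{|A|+|B|-1}$ (arising exactly for $a\in\Delta_{B'}^{\bot}\setminus\Delta_B^{\bot}$ with $a,b\in\Delta_A^{\bot}$), $(q-1)q^{|A|-1}(q^{|B|}-q^{|B'|})$, $(q-1)q^{|A|-1}(2q^{|B|}-q^{|B'|})$, and $2(q-1)q^{|A|-1}(q^{|B|}-q^{|B'|})$. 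For the multiplicities I would count the pairs $(a,b)$ in each case using $|\Delta_A^{\bot}|=q^{m-|A|}$, $|\Delta_B^{\bot}|=q^{m-|B|}$, $|\Delta_A^{\bot}\cap\Delta_B^{\bot}|=q^{m-|A\cup B|}$, $|\Delta_A^{\bot}\cap\Delta_{B'}^{\bot}|=q^{m-|A\cup B'|}$, together with the fact that for fixed $a$ there are $q^{m-|A|}$ choices of $b$ with $S_1(b)+S_1(a+b)=2q^{|A|}$ when $a\in\Delta_A^{\bot}$ and $2q^{m-|A|}$ choices with $S_1(b)+S_1(a+b)=q^{|A|}$ when $a\notin\Delta_A^{\bot}$; dividing each count by $|K|$ produces $A_w$. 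The multiplicity of $2(q-1)q^{|A|-1}(q^{|B|}-q^{|B'|})$ is easiest to obtain at the end as $q^{|A|+|A\cup B|}-1$ minus the other three, and checking that all $A_w$ sum to $q^{|A|+|A\cup B|}$ confirms the table.

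The conceptual content is short: a single application of the two lemmas plus the scaling-invariance remark. The real work, and the only delicate point, is the last step: the number of admissible $b$ genuinely depends on whether $a\in\Delta_A^{\bot}$, so the $a$-sum must be split along the nested subspaces $\Delta_{A\cup B}^{\bot}\subseteq\Delta_{A\cup B'}^{\bot}$ and $\Delta_B^{\bot}\subseteq\Delta_{B'}^{\bot}$ and the inclusion--exclusion carried out carefully. One should also check the degenerate cases (for instance $|A|=0$, or $A\subseteq B$), where some of the four multiplicities collapse to $0$ in agreement with the stated formulas, the hypothesis $|A|+|B'|>0$ serving to exclude the fully degenerate one-weight situation.
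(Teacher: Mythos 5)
Your proposal is correct and follows essentially the same route as the paper's proof: both apply Lemma~\ref{lem-wtL} together with Lemma~\ref{lem-expsum-H}, split into cases according to the position of $a$ in $\Delta_{B}^{\bot}\subseteq\Delta_{B'}^{\bot}$ and of $b,a+b$ relative to $\Delta_{A}^{\bot}$, count via the cardinalities $|\Delta_{A}^{\bot}\cap\Delta_{B}^{\bot}|=q^{m-|A\cup B|}$ etc., and obtain one multiplicity by subtraction before dividing by the kernel size $q^{2m-|A|-|A\cup B|}$. Your packaging of the weight into the single closed form $\frac{q-1}{q}\bigl(2|L|-S_{2}(a)(S_{1}(b)+S_{1}(a+b))\bigr)$ is a tidy presentational streamlining of the paper's case-by-case evaluation of $\Omega$, but not a different method.
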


\begin{proof}
It is easy to check that the length of $\C_{L}$ is $n:=q^{|A|}(q^{|B|}-q^{|B'|})$. By Lemma \ref{lem-wtL}, for $a+ub \in \mathcal{R}\backslash \{0\}$, the Lee weight of the codeword $c_{a+ub}$ in $\C_{L}$ is
\[wt_{L}(c_{a+ub})=2q^{|A|}(q^{|B|}-q^{|B'|})-\Omega,\]
where
\[\Omega=\frac{1}{q}\sum_{u\in \fq}\sum_{y \in \Delta_{B}\setminus \Delta_{B'}}\chi(u\Tr_{q}^{q^m}(ay))\sum_{x\in \Delta_{A}}(\chi(u\Tr_{q}^{q^m}(bx))+\chi(u\Tr_{q}^{q^m}((a+b)x))).\]

By Lemma \ref{lem-expsum-H}, for $u\in \fq^*$, one can obtain that
\begin{align} \label{eq-thm2-Delta}
   \sum_{y\in \Delta_{B}\setminus \Delta_{B'}}\chi(u\Tr_{q}^{q^m}(ay))=&\sum_{y\in \Delta_{B}}\chi(u\Tr_{q}^{q^m}(ay))-\sum_{y\in \Delta_{B'}}\chi(u\Tr_{q}^{q^m}(ay))\nonumber\\
   =&\left\{\begin{array}{ll}
      q^{|B|}-q^{|B'|},    &   \mbox{ if } a\in \Delta_{B}^{\bot};\\
      -q^{|B'|},    &   \mbox{ if } a\notin \Delta_{B}^{\bot}, a\in \Delta_{B'}^{\bot};\\
      0,    &   \mbox{ if } a\notin \Delta_{B'}^{\bot},
      \end{array} \right.
\end{align}
where $\Delta_{B}^{\bot}\subset \Delta_{B'}^{\bot}$. 
To determine the value of $\Omega$, we consider the following three cases.

Case $(1)$: $a\in \Delta_{B}^{\bot}$. Then by Lemma \ref{lem-expsum-H} we have
\begin{align*}
\Omega=&\frac{2}{q}n+\frac{1}{q}(q^{|B|}-q^{|B'|})\sum_{u\in \fq^*}\sum_{x\in \Delta_{A}}(\chi(u\Tr_{q}^{q^m}(bx))+\chi(u\Tr_{q}^{q^m}((a+b)x))) \\
=&\left\{\begin{array}{ll}
2q^{|A|}(q^{|B|}-q^{|B'|}), & \mbox{if } b\in \Delta_{A}^{\bot}, a+b \in \Delta_{A}^{\bot}; \\
2q^{|A|-1}(q^{|B|}-q^{|B'|}),  & \mbox{if } b\notin \Delta_{A}^{\bot}, a+b\notin \Delta_{A}^{\bot};\\
(q+1)q^{|A|-1}(q^{|B|}-q^{|B'|}),  & \mbox{otherwise}.
\end{array}\right.
\end{align*}
Thus, for $a\in \Delta_{B}^{\bot}$, one gets
\[ wt_{L}(c_{a+ub})=\left\{\begin{array}{ll}
0, & \mbox{if } b\in \Delta_{A}^{\bot}, a+b \in \Delta_{A}^{\bot}; \\
2(q-1)q^{|A|-1}(q^{|B|}-q^{|B'|}),  & \mbox{if } b\notin \Delta_{A}^{\bot}, a+b\notin \Delta_{A}^{\bot};\\
(q-1)q^{|A|-1}(q^{|B|}-q^{|B'|}),  & \mbox{otherwise}.
\end{array}\right.\]

Case $(2)$: $a\notin \Delta_{B}^{\bot}$ and $a\in \Delta_{B'}^{\bot}$. Then we have
\begin{align*}
\Omega=&\frac{2}{q}n-q^{|B'|-1}\sum_{u\in \fq^{*}}\sum_{x\in \Delta_{A}}(\chi(u\Tr_{q}^{q^m}(bx))+\chi(u\Tr_{q}^{q^m}((a+b)x)))\\
=&\left\{\begin{array}{ll}
2q^{|A|-1}(q^{|B|}-q^{|B'|+1}), & \mbox{if } b\in \Delta_{A}^{\bot}, a+b \in \Delta_{A}^{\bot};  \\
2q^{|A|-1}(q^{|B|}-q^{|B'|}),  & \mbox{if } b\notin \Delta_{A}^{\bot}, a+b \notin \Delta_{A}^{\bot}; \\
q^{|A|-1}(2q^{|B|}-(q+1)q^{|B'|}),  & \mbox{otherwise}.
\end{array}\right.
\end{align*}
Thus, for $a\notin \Delta_{B}^{\bot}$ and $a\in \Delta_{B'}^{\bot}$, it gives
\begin{equation}\label{wt-case2}
wt_{L}(c_{a+ub})=\left\{\begin{array}{ll}
2(q-1)q^{|A|+|B|-1}, & \mbox{if } b\in \Delta_{A}^{\bot}, a+b \in \Delta_{A}^{\bot}; \\
2(q-1)q^{|A|-1}(q^{|B|}-q^{|B'|}),  & \mbox{if } b\notin \Delta_{A}^{\bot}, a+b \notin \Delta_{A}^{\bot}; \\
(q-1)q^{|A|-1}(2q^{|B|}-q^{|B'|}),  & \mbox{otherwise}.
\end{array}\right.
\end{equation}

Case $(3)$: $a\notin \Delta_{B'}^{\bot}$. Then we have $\Omega=\frac{2}{q}n$, which indicates
\[ wt_{L}(c_{a+ub}) =2(q-1)q^{|A|-1}(q^{|B|}-q^{|B'|}).\]

With the discussion above, $wt_{L}(c_{a+ub})=0$ if and only if $a\in \Delta_{B}^{\bot}$, $b\in \Delta_{A}^{\bot}$ and $a+b \in \Delta_{A}^{\bot}$. This indicates 
\begin{align*}
  A_{0}=&|\{(a,b)\in \fqm^{2}: a\in \Delta_{B}^{\bot}, b\in \Delta_{A}^{\bot}, a+b \in \Delta_{A}^{\bot}\}|\\
  =&|\{(a,b)\in \fqm^{2}: a\in \Delta_{B}^{\bot}, a \in \Delta_{A}^{\bot}, b\in \Delta_{A}^{\bot}\}|\\
  =&q^{m-|A|}|\{a\in \fqm: a\in \Delta_{B}^{\bot}, a \in \Delta_{A}^{\bot}\}|\\
  =&q^{m-|A|}|\Delta_{B}^{\bot}\cap \Delta_{A}^{\bot}|\\
  =&q^{2m-|A|-|A\cup B|}.
\end{align*} 
Here the second equality holds due to the fact that the addition operation in $\Delta_{A}^{\bot}$ is closed, and the last equality holds since $|\Delta_{B}^{\bot} \cap \Delta_{A}^{\bot}|=|\Delta_{A\cup B}^{\bot}|=q^{m-|A\cup B|}$.
This means that the size of $\C_L$ is $q^{|A|+|A\cup B|}$.
Moreover, $\C_L$ has the following four nonzero Lee weights:
$w_{1}=2(q-1)q^{|A|+|B|-1}$, $w_{2}=2(q-1)q^{|A|-1}(q^{|B|}-q^{|B'|})$, $w_{3}=(q-1)q^{|A|-1}(q^{|B|}-q^{|B'|})$ and $w_{4}=(q-1)q^{|A|-1}(2q^{|B|}-q^{|B'|})$.

It then follows from \eqref{wt-case2} that
\begin{align*}
A_{w_{1}}=&|\{(a,b)\in \fqm^{2}: a\notin \Delta_{B}^{\bot}, a\in \Delta_{B'}^{\bot}, b\in \Delta_{A}^{\bot}, a+b \in \Delta_{A}^{\bot}\}|\\
=&q^{m-|A|}|\{a\in \fqm: a\notin \Delta_{B}^{\bot}, a\in \Delta_{B'}^{\bot}, a \in \Delta_{A}^{\bot} \}|\\
=&q^{m-|A|}(|\Delta_{A}^{\bot} \cap \Delta_{B'}^{\bot}|-|\Delta_{A}^{\bot} \cap \Delta_{B}^{\bot}|)
=q^{m-|A|}(q^{m-|A\cup B'|}-q^{m-|A\cup B|}).
\end{align*}
Moreover, according to the computation on the Lee weights of $\C_{L}$, one has 
\begin{align*} 
  N_{1}:=&|\{(a,b)\in \fqm^{2}: a\in \Delta_{B}^{\bot},b\in \Delta_{A}^{\bot}, a+b \notin \Delta_{A}^{\bot} \}| \nonumber\\
  =&|\{(a,b)\in \fqm^{2}: a\in \Delta_{B}^{\bot},b\in \Delta_{A}^{\bot}, a \notin \Delta_{A}^{\bot}\}|  \nonumber\\
  =&q^{m-|A|}(|\Delta_{B}^{\bot}|-|\Delta_{A}^{\bot}\cap \Delta_{B}^{\bot}|)\\
  =&q^{m-|A|}(q^{m-|B|}-q^{m-|A\cup B|})
\end{align*}
and similarly, by denoting $a+b=-c$, it gives
\begin{align*}
  N_{2}:=&|\{(a,b)\in \fqm^{2}: a\in \Delta_{B}^{\bot},b\notin \Delta_{A}^{\bot}, a+b \in \Delta_{A}^{\bot} \}| \nonumber\\
  =&|\{(a,c)\in \fqm^{2}: a\in \Delta_{B}^{\bot},a+c\notin \Delta_{A}^{\bot}, c \in \Delta_{A}^{\bot} \}| \nonumber\\
  =&N_{1}=q^{m-|A|}(q^{m-|B|}-q^{m-|A\cup B|}),
\end{align*}
which indicates $A_{w_{3}}=N_{1}+N_{2}=2q^{m-|A|}(q^{m-|B|}-q^{m-|A\cup B|})$. Similar to the computation on $A_{w_{3}}$, we have 
\begin{align*} 
N_{3}:=&|\{(a,b)\in \fqm^{2}: a\notin \Delta_{B}^{\bot}, a\in \Delta_{B'}^{\bot}, b\in \Delta_{A}^{\bot}, a+b \notin \Delta_{A}^{\bot} \}| \nonumber\\
=&|\{(a,b)\in \fqm^{2}:  a\notin \Delta_{B}^{\bot}, a\in \Delta_{B'}^{\bot}, b\in \Delta_{A}^{\bot}, a \notin \Delta_{A}^{\bot}\}|  \nonumber\\
=&q^{m-|A|}(|\Delta_{B'}^{\bot}\setminus \Delta_{B}^{\bot}|-|(\Delta_{B'}^{\bot}\setminus \Delta_{B}^{\bot})\cap \Delta_{A}^{\bot}|)\nonumber\\
=&q^{m-|A|}(q^{m-|B'|}-q^{m-|B|}-q^{m-|A\cup B'|}+q^{m-|A\cup B|})
\end{align*}
and by denoting $a+b=-c$ it gives
\begin{align*}
N_{4}:=&|\{(a,b)\in \fqm^{2}: a\notin \Delta_{B}^{\bot}, a\in \Delta_{B'}^{\bot}, b\notin \Delta_{A}^{\bot}, a+b \in \Delta_{A}^{\bot} \}| \nonumber\\
=&|\{(a,c)\in \fqm^{2}: a\notin \Delta_{B}^{\bot}, a\in \Delta_{B'}^{\bot},a+c\notin \Delta_{A}^{\bot}, c \in \Delta_{A}^{\bot} \}| \nonumber\\
=&N_{3}=q^{m-|A|}(q^{m-|B'|}-q^{m-|B|}-q^{m-|A\cup B'|}+q^{m-|A\cup B|}),
\end{align*}
which implies $A_{w_{4}}=N_{3}+N_{4}=2q^{m-|A|}(q^{m-|B'|}-q^{m-|B|}-q^{m-|A\cup B'|}+q^{m-|A\cup B|})$. Therefore, $A_{w_{2}}=q^{2m}-A_{w_{0}}-A_{w_{1}}-A_{w_{3}}-A_{w_{4}}=q^{2m}-q^{m-|A|}(2q^{m-|B'|}-q^{m-|A\cup B'|})$. Since each codeword in $\C_{L}$ repeats $A_{0}=q^{2m-|A|-|A\cup B|}$ times, the frequency of $w_{i}$ can be obtained from the value of $A_{w_{i}}$ by dividing $A_{0}$, where $0\leq i\leq 4$. Then the Lee weight distribution of $\C_{L}$ follows.
This completes the proof.
\end{proof}

\begin{remark}
When $q=2$ and $|B|=m$, the code $\C_{L}$ in Theorem \ref{code-1} is reduced to the linear code over $\ftwo+u\ftwo$ in \cite[Theorem 3.2]{YWXZQY}. Moreover, when $|B|=m$ and both $|A|$ and $|B'|$ divide $m$, the code $\C_{L}$ in Theorem \ref{code-1} has the same parameters and weight distribution as those of the code in \cite[Theorem 1]{HCLZ}. 
\end{remark}

\begin{remark}
Notice that the code $\C_{L}$ in Theorem \ref{code-1} is a $3$-weight code if $A\subseteq B$; it is a $2$-weight code if $A\subseteq B'$; and it is a $3$-weight code if $|A|=m$.
\end{remark}

\begin{example}
Let $q=2$, $m=6$, $A=\{1,2,3,5\}$, $B=\{1,2,3,4\}$ and $B'=\{2\}$. Magma experiments show that $\C_{L}$ is a linear code over $\ftwo+u\ftwo$ of length $224$ and size $2^{9}$, and it has the weight enumerator $1+2z^{112}+482z^{224}+26z^{240}+z^{256}$, which is consistent with our result in Theorem \ref{code-1}.
\end{example}

\subsection{The second class of linear codes  $\C_{L}$ with $L=\Delta_{A}^c+u(\Delta_{B}\setminus \Delta_{B'})$}

\begin{thm} \label{code-2}
Let $m$ be a positive integer. Let $\Delta_{A}$, $\Delta_{B}$ and $\Delta_{B'}$ be simplicial complexes of $\fqm$, where $A\subset [m]$ and $B' \subset B\subseteq [m]$. Denote $L=\Delta_{A}^c+u(\Delta_{B}\setminus \Delta_{B'})$. Then $\C_{L}$ defined by \eqref{CL} is a $9$-weight code of length $(q^{m}-q^{|A|})(q^{m}-q^{|B'|})$, size $q^{2m}$, and its Lee weight distribution is given by
\begin{center} \small  
\begin{tabular}{llll}
\hline Weight $w$ & Multiplicity $A_{w}$\\ \hline
$0$ & $ 1$ \\
$(q-1)q^{m-1}(q^{|B|}-q^{|B'|})$           & $2(q^{m-|A\cup B|}-1)$ \\
$(q-1)(q^{m-1}-q^{|A|-1})(q^{|B|}-q^{|B'|})$ & $2(q^{m-|B|}-q^{m-|A\cup B|})$ \\
$2(q-1)q^{m-1}(q^{|B|}-q^{|B'|})$ & $q^{m-|A\cup B|}(q^{m-|A|}-1)-(q^{m-|A\cup B|}-1)$ \\
$(q-1)(2q^{m-1}-q^{|A|-1})(q^{|B|}-q^{|B'|})$ & $2(q^{m-|B|}-q^{m-|A\cup B|})(q^{m-|A|}-1)$ \\
$2(q-1)(q^{m-1}-q^{|A|-1})(q^{|B|}-q^{|B'|})$ & $q^{2m}-q^{m-|A|}(2q^{m-|B'|}-q^{m-|A\cup B'|})$ \\
$(q-1)(2(q^{m}-q^{|A|})q^{|B|-1}-q^{m+|B'|-1})$ & $2(q^{m-|A\cup B'|}-q^{m-|A\cup B|})$ \\
$(q-1)(q^{m-1}-q^{|A|-1})(2q^{|B|}-q^{|B'|})$ & $2(q^{m-|B'|}-q^{m-|B|})-2(q^{m-|A\cup B'|}-q^{m-|A\cup B|})$ \\
$2(q-1)((q^{m-1}-q^{|A|-1})q^{|B|}-q^{m+|B'|-1})$ & $(q^{m-|A\cup B'|}-q^{m-|A\cup B|})(q^{m-|A|}-2)$ \\
$(q-1)(2(q^{m-1}-q^{|A|-1})(q^{|B|}-q^{|B'|})-q^{|A|+|B'|-1})$ & $2(q^{m-|B'|}-q^{m-|B|}-q^{m-|A\cup B'|}+q^{m-|A\cup B|})(q^{m-|A|}-1)$ \\
\hline
\end{tabular}
\end{center}
\end{thm}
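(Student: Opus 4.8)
The plan is to follow the proof of Theorem~\ref{code-1}, but with a finer case analysis: since $\Delta_{A}$ is replaced by its complement $\Delta_{A}^{c}$, the relevant inner exponential sum now takes three values rather than two. Concretely, I would first record the length $n=|\Delta_{A}^{c}|\cdot|\Delta_{B}\setminus\Delta_{B'}|=(q^{m}-q^{|A|})(q^{|B|}-q^{|B'|})$ and apply Lemma~\ref{lem-wtL}. For a fixed $u\in\fq^{*}$, transitivity of the trace turns $\chi(u\Tr_{q}^{q^m}(\cdot))$ into $\zeta_{p}^{\Tr_{p}^{q^m}(u\,\cdot)}$, so by Lemma~\ref{lem-expsum-H} (together with orthogonality for the sum over all of $\fqm$) both
\[
U(a):=\sum_{y\in\Delta_{B}\setminus\Delta_{B'}}\chi(u\Tr_{q}^{q^m}(ay)),\qquad
T(c):=\sum_{x\in\Delta_{A}^{c}}\chi(u\Tr_{q}^{q^m}(cx))
\]
are independent of the choice of $u\in\fq^{*}$: the value of $U(a)$ is exactly the one appearing in \eqref{eq-thm2-Delta}, while $T(c)=q^{m}-q^{|A|}$ if $c=0$, $T(c)=-q^{|A|}$ if $c\in\Delta_{A}^{\bot}\setminus\{0\}$, and $T(c)=0$ otherwise. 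Pulling out the $u=0$ term of $\Omega$ (which equals $2n$) then yields the closed form
\[
wt_{L}(c_{a+ub})=\frac{q-1}{q}\bigl(2n-U(a)(T(b)+T(a+b))\bigr),\qquad (a,b)\in\mathcal{R}\setminus\{0\}.
\]

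The next step is the case analysis, driven by the position of $a$ in the flag $\Delta_{B}^{\bot}\subseteq\Delta_{B'}^{\bot}\subseteq\fqm$ (three cases, for the three values of $U(a)$) and the positions of $b$ and of $a+b$ in $\{0\}\subseteq\Delta_{A}^{\bot}\subseteq\fqm$ (for the values of $T(b)$ and $T(a+b)$), keeping in mind that the conditions $a=0$, $b=0$ and $a+b=0$ are pairwise incompatible away from the origin. Substituting the admissible combinations into the formula above, the quantity $U(a)(T(b)+T(a+b))$ runs over exactly nine values (including $0$), which gives the nine nonzero Lee weights listed in the table. For the size I would note that $c_{a+ub}=0$ forces $\Tr_{q}^{q^m}(ax_{1})=0$ for all $x_{1}\in\Delta_{A}^{c}$, and since $\Delta_{A}^{c}$ spans $\fqm$ (as $A\subset[m]$) this gives $a=0$ and then $b=0$; hence $a+ub\mapsto c_{a+ub}$ is injective, $A_{0}=1$, and $|\C_{L}|=q^{2m}$.

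Finally, for each of the nine weights $w$ I would count the pairs $(a,b)\in\fqm^{2}$ whose case data produces $w$, using the dimension identities $|\Delta_{S}^{\bot}|=q^{m-|S|}$ and $\Delta_{S}^{\bot}\cap\Delta_{T}^{\bot}=\Delta_{S\cup T}^{\bot}$, the additive closure of $\Delta_{A}^{\bot}$ (so that ``$b\in\Delta_{A}^{\bot}$ and $a+b\in\Delta_{A}^{\bot}$'' reduces to ``$a\in\Delta_{A}^{\bot}$ and $b\in\Delta_{A}^{\bot}$'', exactly as in Theorem~\ref{code-1}), the substitution $c=-(a+b)$ to pair the ``$b$-side'' counts with the ``$(a+b)$-side'' counts, and inclusion--exclusion over the nested subspaces; the last multiplicity then follows from $\sum_{w}A_{w}=q^{2m}$. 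I expect the main difficulty to be purely organizational: the raw case split has more than nine branches, several of which collapse to a common weight, and the degenerate sub-cases $a=0$, $b=0$, $a=-b$ change the value of $T$ and so must be isolated and counted on their own; one must also check, using $|B'|<|B|$, that the nine tabulated weights are pairwise distinct and that every listed multiplicity is nonnegative. Beyond that, each individual step is a routine computation parallel to the proof of Theorem~\ref{code-1}.
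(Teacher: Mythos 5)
Your proposal is correct and follows essentially the same route as the paper's proof: Lemma \ref{lem-wtL} plus the two exponential-sum evaluations \eqref{eq-thm2-Delta} and \eqref{eq-2-2}, a case split on the position of $a$ in $\Delta_{B}^{\bot}\subset\Delta_{B'}^{\bot}\subset\fqm$ and of $b$, $a+b$ in $\{0\}\subset\Delta_{A}^{\bot}\subset\fqm$, and the same counting devices (additive closure of $\Delta_{A}^{\bot}$, the substitution $c=-(a+b)$, inclusion--exclusion, and the complementary count for the last frequency). Your compact formula $wt_{L}(c_{a+ub})=\frac{q-1}{q}\bigl(2n-U(a)(T(b)+T(a+b))\bigr)$ is just a uniform packaging of the paper's case-by-case evaluation of $\Omega$, and your spanning argument for $|\C_{L}|=q^{2m}$ is a valid (slightly cleaner) variant of the paper's minimum-weight argument.
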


\begin{proof}
It is clear that the length of $\C_{L}$ is $n:=(q^{m}-q^{|A|})(q^{|B|}-q^{|B'|})$. By Lemma \ref{lem-wtL}, for $a+ub \in \mathcal{R}\backslash \{0\}$, the Lee weight of the codeword $c_{a+ub}$ in $\C_{L}$ is
\begin{align*}
wt_{L}(c_{a+ub})=&2(q^{m}-q^{|A|})(q^{|B|}-q^{|B'|})-\Omega,
\end{align*}
where
\begin{align*}
\Omega=\frac{1}{q}\sum_{u\in \fq}\sum_{y \in \Delta_{B}\setminus \Delta_{B'}}\chi(u\Tr_{q}^{q^m}(ay))\sum_{x\in \Delta_{A}^c}(\chi(u\Tr_{q}^{q^m}(bx))+\chi(u\Tr_{q}^{q^m}((a+b)x))).
\end{align*}

Note that $\sum_{y \in \Delta_{B}\setminus \Delta_{B'}}\chi(u\Tr_{q}^{q^m}(ay))$ for $u\in \fq^*$ is given by \eqref{eq-thm2-Delta}. Similar to \eqref{eq-thm2-Delta}, for $u\in \fq^*$, one has 
\begin{equation} \label{eq-2-2}
  \sum_{x\in \Delta_{A}^c}\chi(u\Tr_{q}^{q^m}(bx))
  =\left\{\begin{array}{ll}
    q^{m}-q^{|A|},    &   \mbox{ if } b=0;\\
    -q^{|A|},    &   \mbox{ if } b \ne 0, b\in \Delta_{A}^{\bot};\\
    0,    &   \mbox{ if }  b\notin \Delta_{A}^{\bot}.
    \end{array} \right.
\end{equation}
To further determine the value of $\Omega$, we consider the following three cases.

Case $(1)$: $a\in \Delta_{B}^{\bot}$. Then we have 
\begin{align*}
  \Omega=\frac{2}{q}n+\frac{1}{q}(q^{|B|}-q^{|B'|})\sum_{u\in \fq^*}\sum_{x\in \Delta_{A}^c}(\chi(u\Tr_{q}^{q^m}(bx))+\chi(u\Tr_{q}^{q^m}((a+b)x))).
\end{align*}
Next we discuss the following three subcases:

Subcase $(1.1)$: $b=0$. By \eqref{eq-2-2}, we have 
\[\Omega=\left\{\begin{array}{ll}
  2(q^{m}-q^{|A|})(q^{|B|}-q^{|B'|}),    &   \mbox{ if } a+b=0;\\
  ((q+1)q^{m-1}-2q^{|A|})(q^{|B|}-q^{|B'|}),    &   \mbox{ if } a+b \ne 0, a+b\in \Delta_{A}^{\bot};\\
  (q+1)(q^{m-1}-q^{|A|-1})(q^{|B|}-q^{|B'|}),    &   \mbox{ if }  a+b\notin \Delta_{A}^{\bot},
  \end{array} \right.\]
which indicates
\begin{equation*} \label{eq-2-1}
wt_{L}(c_{a+ub})=\left\{\begin{array}{ll}
  0,    &   \mbox{ if } a+b=0;\\
  w_{1}:=(q-1)q^{m-1}(q^{|B|}-q^{|B'|}),    &   \mbox{ if } a+b \ne 0, a+b\in \Delta_{A}^{\bot};\\
  w_{2}:=(q-1)(q^{m-1}-q^{|A|-1})(q^{|B|}-q^{|B'|}),    &   \mbox{ if }  a+b\notin \Delta_{A}^{\bot}.
\end{array}\right.
\end{equation*}

Subcase $(1.2)$: $b \ne 0$, $b\in \Delta_{A}^{\bot}$. Similar to Subcase $(1.1)$, we have 
\begin{equation*} \label{eq-2-3}
wt_{L}(c_{a+ub})=\left\{\begin{array}{ll}
  w_{1}=(q-1)q^{m-1}(q^{|B|}-q^{|B'|}),    &   \mbox{ if } a+b=0;\\
  w_{3}:=2(q-1)q^{m-1}(q^{|B|}-q^{|B'|}),    &   \mbox{ if } a+b \ne 0, a+b\in \Delta_{A}^{\bot};\\
  w_{4}:=(q-1)(2q^{m-1}-q^{|A|-1})(q^{|B|}-q^{|B'|}),    &   \mbox{ if }  a+b\notin \Delta_{A}^{\bot}.
\end{array}\right.
\end{equation*}

Subcase $(1.3)$: $b\notin \Delta_{A}^{\bot}$. Similar to Subcase $(1.1)$, it gives
\begin{equation*} \label{eq-2-4}
wt_{L}(c_{a+ub})=\left\{\begin{array}{ll}
  w_{2}=(q-1)(q^{m-1}-q^{|A|-1})(q^{|B|}-q^{|B'|}),    &   \mbox{ if } a+b=0;\\
  w_{4}=(q-1)(2q^{m-1}-q^{|A|-1})(q^{|B|}-q^{|B'|}),    &   \mbox{ if } a+b \ne 0, a+b\in \Delta_{A}^{\bot};\\
  w_{5}:=2(q-1)(q^{m-1}-q^{|A|-1})(q^{|B|}-q^{|B'|}),    &   \mbox{ if }  a+b\notin \Delta_{A}^{\bot},
\end{array}\right.
\end{equation*}

Case $(2)$: $a\notin \Delta_{B}^{\bot}$, $a\in \Delta_{B'}^{\bot}$. Then we have  
\begin{align*}
  \Omega=\frac{2}{q}n-q^{|B'|-1}\sum_{u\in \fq^*}\sum_{x\in \Delta_{A}^c}(\chi(u\Tr_{q}^{q^m}(bx))+\chi(u\Tr_{q}^{q^m}((a+b)x))).
\end{align*}
Similar to Case $(1)$, we have the following results.

Subcase $(2.1)$: $b=0$. One has 
\begin{equation*}  
  wt_{L}(c_{a+ub})=\left\{\begin{array}{ll}
    w_{6}:=2(q-1)(q^{m}-q^{|A|})q^{|B|-1},    &   \mbox{ if } a+b=0;\\
    w_{7}:=(q-1)(2(q^{m}-q^{|A|})q^{|B|-1}-q^{m+|B'|-1}),    &   \mbox{ if } a+b \ne 0, a+b\in \Delta_{A}^{\bot};\\
    w_{8}:=(q-1)(q^{m-1}-q^{|A|-1})(2q^{|B|}-q^{|B'|}),    &   \mbox{ if }  a+b\notin \Delta_{A}^{\bot}.
  \end{array}\right.
\end{equation*}

Subcase $(2.2)$: $b \ne 0$, $b\in \Delta_{A}^{\bot}$. It follows that 
\begin{equation*}  
  wt_{L}(c_{a+ub})=\left\{\begin{array}{ll}
    w_{7}=(q-1)(2(q^{m}-q^{|A|})q^{|B|-1}-q^{m+|B'|-1}),    &   \mbox{ if } a+b=0;\\
    w_{9}:=2(q-1)((q^{m-1}-q^{|A|-1})q^{|B|}-q^{m+|B'|-1}),    &   \mbox{ if } a+b \ne 0, a+b\in \Delta_{A}^{\bot};\\
    w_{10}:=(q-1)(2(q^{m-1}-q^{|A|-1})(q^{|B|}-q^{|B'|})-q^{|A|+|B'|-1}),    &   \mbox{ if }  a+b\notin \Delta_{A}^{\bot}.
  \end{array}\right.
\end{equation*}

Subcase $(2.3)$: $b\notin \Delta_{A}^{\bot}$. It gives 
\begin{equation*}  
  wt_{L}(c_{a+ub})=\left\{\begin{array}{ll}
    w_{8}=(q-1)(q^{m-1}-q^{|A|-1})(2q^{|B|}-q^{|B'|}),    &   \mbox{ if } a+b=0;\\
    w_{10}=(q-1)(2(q^{m-1}-q^{|A|-1})(q^{|B|}-q^{|B'|})-q^{|A|+|B'|-1}),    &   \mbox{ if } a+b \ne 0, a+b\in \Delta_{A}^{\bot};\\
    w_{5}=2(q-1)(q^{m-1}-q^{|A|-1})(q^{|B|}-q^{|B'|}),    &   \mbox{ if }  a+b\notin \Delta_{A}^{\bot}.
  \end{array}\right.
\end{equation*}

Case $(3)$: $a\notin \Delta_{B'}^{\bot}$. Then it can be easily verified that $wt_{L}(c_{a+ub})=w_{5}=2(q-1)(q^{m-1}-q^{|A|-1})(q^{|B|}-q^{|B'|})$ in this case.

Notice that $w_{2}$ is the minimal value in the $10$ nonzero Lee weights $w_{i}$ for $1\leq i \leq 10$. It's clear that $w_{2}>0$, and thus $wt_{L}(c_{a+ub})=0$ if and only if $a=b=0$. This shows that the size of $\C_{L}$ is $q^{2m}$.

Now, we compute the Lee weight distribution of $\C_{L}$.
From the discussion on $wt_{L}(c_{a+ub})$, one has
\begin{align*}
  A_{w_{1}}=&|\{(a,b)\in \fqm^{2}: a\in \Delta_{B}^{\bot}, b=0, a+b \ne 0, a+b\in \Delta_{A}^{\bot}\}|\\
  +&|\{(a,b)\in \fqm^{2}: a\in \Delta_{B}^{\bot}, b \ne 0, b\in \Delta_{A}^{\bot}, a+b=0\}| \\
  =&2|\{(a,b)\in \fqm^{2}: a\in \Delta_{B}^{\bot}, a \ne 0, a\in \Delta_{A}^{\bot}, b=0\}|\\
  =&2(q^{m-|A\cup B|}-1),
\end{align*}
where the second equality holds since $|\{(a,b)\in \fqm^{2}: a\in \Delta_{B}^{\bot}, b \ne 0, b\in \Delta_{A}^{\bot}, a+b=0\}|=|\{(a,c)\in \fqm^{2}: a\in \Delta_{B}^{\bot}, a+c \ne 0, a+c\in \Delta_{A}^{\bot}, c=0\}|$ by denoting $c:=a+b$. 
Similarly we have 
\begin{align*}
  A_{w_{2}}=&2|\{(a,b)\in \fqm^{2}: a\in \Delta_{B}^{\bot}, b=0, a+b\notin \Delta_{A}^{\bot}\}|\\
  =&2|\{(a,b)\in \fqm^{2}: a\in \Delta_{B}^{\bot}, a\notin \Delta_{A}^{\bot}, b=0\}|\\
  =&2(q^{m-|B|}-q^{m-|A\cup B|}),
\end{align*}
and
\begin{align*}
  A_{w_{3}}=&|\{(a,b)\in \fqm^{2}: a\in \Delta_{B}^{\bot},b \ne 0, b\in \Delta_{A}^{\bot},a+b \ne 0, a+b\in \Delta_{A}^{\bot}\}| \\
  =&|\{(a,b)\in \fqm^{2}: a\in \Delta_{B}^{\bot}, a\in \Delta_{A}^{\bot}, b \ne 0, b\in \Delta_{A}^{\bot},a+b \ne 0\}|\\
  =&|\{(a,b)\in \fqm^{2}: a\in \Delta_{B}^{\bot}, a\in \Delta_{A}^{\bot}, b \ne 0, b\in \Delta_{A}^{\bot}\}|\\
  &-|\{(a,b)\in \fqm^{2}: a\in \Delta_{B}^{\bot}, a\in \Delta_{A}^{\bot}, b \ne 0, b\in \Delta_{A}^{\bot},a+b = 0\}|\\
  =&|\{(a,b)\in \fqm^{2}: a\in \Delta_{B}^{\bot}, a\in \Delta_{A}^{\bot}, b \ne 0, b\in \Delta_{A}^{\bot}\}|
  -|\{(a,b)\in \fqm^{2}: a \ne 0, a\in \Delta_{B}^{\bot}, a\in \Delta_{A}^{\bot}\}|\\
  =&q^{m-|A\cup B|}(q^{m-|A|}-1)-(q^{m-|A\cup B|}-1).
\end{align*}
Using the same techniques of computing $A_{1}$, $A_{2}$ and $A_{w_{3}}$, we have $A_{w_{4}}=2(q^{m-|B|}-q^{m-|A\cup B|})(q^{m-|A|}-1)$, $A_{w_{6}}=0$, $A_{w_{7}}=2(q^{m-|A\cup B'|}-q^{m-|A\cup B|})$, $A_{w_{8}}=2(q^{m-|B'|}-q^{m-|B|})-2(q^{m-|A\cup B'|}-q^{m-|A\cup B|})$, $A_{w_{9}}=(q^{m-|A\cup B'|}-q^{m-|A\cup B|})(q^{m-|A|}-2)$, and $A_{w_{10}}=2((q^{m-|B'|}-q^{m-|B|}) -(q^{m-|A\cup B'|}-q^{m-|A\cup B|}))(q^{m-|A|}-1)$.

With the compuation as above, we have 
$A_{w_{5}}=q^{2m}-q^{m-|A|}(2q^{m-|B'|}-q^{m-|A\cup B'|})$ due to $\sum_{1\leq i \leq 10}A_{w_{i}}+1=q^{2m}$. Then the Lee weight distribution of $\C_{L}$ follows. 
This completes the proof.
\end{proof}

\begin{remark}
When $q=2$ and $|B|=m$, the code $\C_{L}$ in Theorem \ref{code-2} is reduced to the linear code over $\ftwo+u\ftwo$ in \cite[Theorem 3.6]{YWXZQY}. Moreover, when $|B|=m$ and both $|A|$ and $|B'|$ divide $m$, the code $\C_{L}$ in Theorem \ref{code-2} has the same parameters and weight distribution as those of the code in \cite[Theorem 2]{HCLZ}. 
\end{remark}

\begin{remark}
Notice that the code $\C_{L}$ in Theorem \ref{code-2} is a $7$-weight code if $A\subseteq B$; it is a $5$-weight code if $A\subseteq B'$. Moreover, it is a $6$-weight code if $|B|=m$. 
\end{remark}

\begin{example}
Let $q=2$, $m=6$, $A=\{1,2,3,5\}$, $B=\{1,2,3,4\}$ and $B'=\{2\}$. Magma experiments show that $\C_{L}$ is a linear code over $\ftwo+u\ftwo$ of length $672$ and size $2^{12}$, and it has the weight enumerator $1+4z^{336}+2z^{448}+4z^{640}+156z^{656}+3856z^{672}+4z^{704}+52z^{720}+12z^{784}+5z^{896}$, which is consistent with our result in Theorem \ref{code-2}.
\end{example}

\subsection{The third class of linear codes $\C_{L}$ with $L=\Delta_{A}+u(\Delta_{B}\setminus \Delta_{B'})^c$}

\begin{thm}  \label{code-3}
Let $m$ be a positive integer. Let $\Delta_{A}$, $\Delta_{B}$ and $\Delta_{B'}$ be simplicial complexes of $\fqm$, where $A\subseteq [m]$ and $B'\subseteq B\subset [m]$. Denote $L=\Delta_{A}+u(\Delta_{B}\setminus \Delta_{B'})^c$. Then $\C_{L}$ defined by \eqref{CL} is a $5$-weight code of length $q^{|A|}(q^m-q^{|B|}+q^{|B'|})$, size $q^{m+|A|}$, and its Lee weight distribution is given by
\begin{center} \normalsize 
\begin{tabular}{llll}
\hline Weight $w$ & Multiplicity $A_{w}$\\ \hline
$0$ & $ 1$ \\
$2(q-1)q^{|A|-1}(q^{m}-q^{|B|}+q^{|B'|})$  & $q^{m+|A|}-2q^{m-|B'|}+q^{m-|A\cup B'|}$ \\
$2(q-1)q^{m+|A|-1}$ & $q^{m-|A\cup B|}-1$ \\
$(q-1)q^{|A|-1}(2q^{m}-q^{|B|}+q^{|B'|})$ & $2(q^{m-|B|}-q^{m-|A\cup B|})$ \\
$2(q-1)q^{|A|-1}(q^m-q^{|B|})$ & $q^{m-|A\cup B'|}-q^{m-|A\cup B|}$ \\
$(q-1)q^{|A|-1}(2q^{m}-2q^{|B|}+q^{|B'|})$ & $2(q^{m-|B'|}-q^{m-|B|}-q^{m-|A\cup B'|}+q^{m-|A\cup B|})$ \\
\hline
\end{tabular}
\end{center}
\end{thm}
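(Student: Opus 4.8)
The plan is to mirror, essentially verbatim, the structure of the proof of Theorem~\ref{code-1}, now applied to the defining set $L=\Delta_{A}+u(\Delta_{B}\setminus\Delta_{B'})^c$. First I would record the length: since $(\Delta_{B}\setminus\Delta_{B'})^c$ has cardinality $q^m-(q^{|B|}-q^{|B'|})=q^m-q^{|B|}+q^{|B'|}$ and $|\Delta_A|=q^{|A|}$, the length is $n=q^{|A|}(q^m-q^{|B|}+q^{|B'|})$. By Lemma~\ref{lem-wtL}, $wt_{L}(c_{a+ub})=2n-\Omega$ with $\Omega=\frac{1}{q}\sum_{u\in\fq}\sum_{y\in(\Delta_{B}\setminus\Delta_{B'})^c}\chi(u\Tr_{q}^{q^m}(ay))\sum_{x\in\Delta_{A}}\big(\chi(u\Tr_{q}^{q^m}(bx))+\chi(u\Tr_{q}^{q^m}((a+b)x))\big)$.

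The key computational step is the inner character sum over the complement. Using $\sum_{y\in(\Delta_B\setminus\Delta_{B'})^c}=\sum_{y\in\fqm}-\sum_{y\in\Delta_B\setminus\Delta_{B'}}$ together with $\sum_{y\in\fqm}\chi(u\Tr_{q}^{q^m}(ay))=0$ for $u\in\fq^{*}$ (as $a$ ranges over the whole field for $a\ne 0$; the case $a=0$ must be handled separately), and combining with \eqref{eq-thm2-Delta}, I obtain for $u\in\fq^{*}$ and $a\ne 0$:
\[
\sum_{y\in(\Delta_B\setminus\Delta_{B'})^c}\chi(u\Tr_{q}^{q^m}(ay))=
\begin{cases}
q^{|B'|}-q^{|B|}, & a\in\Delta_B^{\bot};\\
q^{|B'|}, & a\notin\Delta_B^{\bot},\ a\in\Delta_{B'}^{\bot};\\
0, & a\notin\Delta_{B'}^{\bot},
\end{cases}
\]
while for $a=0$ the sum is $q^m-q^{|B|}+q^{|B'|}$. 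Then, exactly as in Theorem~\ref{code-1}, I split into cases according to whether $a\in\Delta_B^{\bot}$, $a\in\Delta_{B'}^{\bot}\setminus\Delta_B^{\bot}$, or $a\notin\Delta_{B'}^{\bot}$, and within each case according to whether $b$ and $a+b$ lie in $\Delta_A^{\bot}$ (using Lemma~\ref{lem-expsum-H} for the $\Delta_A$-sum). This yields $\Omega$ and hence $wt_L(c_{a+ub})$ in each configuration; collecting the distinct nonzero values should produce precisely the five weights listed.

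For the weight distribution I would compute each frequency $A_{w_i}$ as the number of pairs $(a,b)\in\fqm^2$ landing in the corresponding configuration, using the same bijective substitution $c=a+b$ to collapse symmetric cases (so the ``$b$'' and ``$a+b$'' conditions contribute equal counts), and the identities $|\Delta_B^{\bot}\cap\Delta_A^{\bot}|=q^{m-|A\cup B|}$, $|\Delta_{B'}^{\bot}\cap\Delta_A^{\bot}|=q^{m-|A\cup B'|}$, and $\Delta_B^{\bot}\subseteq\Delta_{B'}^{\bot}$. The size $q^{m+|A|}$ follows from determining $A_0=|\{(a,b):a\in\Delta_B^{\bot},\ b\in\Delta_A^{\bot},\ a+b\in\Delta_A^{\bot}\}|$ just as in Theorem~\ref{code-1} (note the extra $a=0$ contribution from the complement sum collapses into this count), giving $A_0=q^{2m-|A|-|A\cup B|}$ and hence $|\C_L|=q^{2m}/A_0=q^{m+|A|}$; dividing the raw pair-counts by $A_0$ gives the tabulated multiplicities, with the largest-multiplicity weight obtained by subtracting the others from $q^{2m}$. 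The main obstacle I anticipate is bookkeeping: the $a=0$ row must be tracked carefully because the complement sum behaves differently there, and one must verify that the five nominally distinct weight expressions do not accidentally coincide (or that coincidences are correctly merged), which is where an off-by-a-power error would most easily creep in; otherwise the argument is a routine adaptation of the Theorem~\ref{code-1} proof.
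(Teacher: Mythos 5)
Your overall strategy is the same as the paper's: compute the character sum over the complement (your case analysis and values agree with the paper's \eqref{eq-thm3-B'}), split into the four cases $a=0$, $a\in\Delta_{B}^{\bot}\setminus\{0\}$, $a\in\Delta_{B'}^{\bot}\setminus\Delta_{B}^{\bot}$, $a\notin\Delta_{B'}^{\bot}$, subdivide on whether $b$ and $a+b$ lie in $\Delta_{A}^{\bot}$, and count pairs via the substitution $c=a+b$. However, there is a concrete error in your determination of the zero-weight set, and hence of the code size. You claim $A_{0}=|\{(a,b):a\in\Delta_{B}^{\bot},\,b\in\Delta_{A}^{\bot},\,a+b\in\Delta_{A}^{\bot}\}|=q^{2m-|A|-|A\cup B|}$ ``just as in Theorem \ref{code-1}.'' That transfer fails here: passing to the complement flips the sign of the $\Delta_{B}$-contribution, so for $a\in\Delta_{B}^{\bot}\setminus\{0\}$ the inner sum equals $q^{|B'|}-q^{|B|}\leq 0$ rather than $q^{|B|}-q^{|B'|}\geq 0$, and a pair with $a\in\Delta_{B}^{\bot}\setminus\{0\}$ and $b,a+b\in\Delta_{A}^{\bot}$ has Lee weight $2(q-1)q^{m+|A|-1}$ --- the largest entry in the table, not zero. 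The correct zero-weight set is only $\{(a,b):a=0,\,b\in\Delta_{A}^{\bot}\}$, of size $q^{m-|A|}$, giving $|\C_{L}|=q^{2m}/q^{m-|A|}=q^{m+|A|}$. Your own arithmetic is also internally inconsistent: $q^{2m}/q^{2m-|A|-|A\cup B|}=q^{|A|+|A\cup B|}$, which equals the claimed $q^{m+|A|}$ only if $|A\cup B|=m$, and that is not assumed.

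Two smaller points. First, to conclude that $wt_{L}(c_{a+ub})=0$ forces $a=0$ and $b\in\Delta_{A}^{\bot}$, one must check that the smallest candidate weight, $w_{4}=2(q-1)q^{|A|-1}(q^{m}-q^{|B|})$, is strictly positive; this is exactly where the hypothesis $B\subset[m]$ (so $|B|<m$) is used, and it is the coincidence check you correctly flag as a risk. Second, once the zero-weight set is corrected, every raw pair-count must be divided by $q^{m-|A|}$ (not by $q^{2m-|A|-|A\cup B|}$) to obtain the tabulated multiplicities; with that repair the remaining computation goes through as you describe and reproduces the paper's proof.
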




\begin{proof}
The length of $\C_{L}$ is $n:=q^{|A|}(q^m-q^{|B|}+q^{|B'|})$. By Lemma \ref{lem-wtL}, for $a+ub \in \mathcal{R}\backslash \{0\}$, the Lee weight of the codeword $c_{a+ub}$ in $\C_{L}$ is
\begin{align*}
wt_{L}(c_{a+ub})=&2q^{|A|}(q^m-q^{|B|}+q^{|B'|})-\Omega,
\end{align*}
where
\begin{align*}
\Omega=\frac{1}{q}\sum_{u\in \fq}\sum_{y \in (\Delta_{B}\setminus \Delta_{B'})^c}\chi(u\Tr_{q}^{q^m}(ay))\sum_{x\in \Delta_{A}}(\chi(u\Tr_{q}^{q^m}(bx))+\chi(u\Tr_{q}^{q^m}((a+b)x))).
\end{align*}

By Lemma \ref{lem-expsum-H}, for $u\in \fq^*$, one can obtain that
\begin{align} \label{eq-thm3-B'}
\sum_{y\in (\Delta_{B}\setminus \Delta_{B'})^c}\chi(u\Tr_{q}^{q^m}(ay))=&\sum_{y\in \fqm}\chi(u\Tr_{q}^{q^m}(ay))-\sum_{y\in \Delta_{B}}\chi(u\Tr_{q}^{q^m}(ay))+\sum_{x\in \Delta_{B'}}\chi(u\Tr_{q}^{q^m}(ay))\nonumber\\
=&\left\{\begin{array}{ll}
  q^{m}-q^{|B|}+q^{|B'|},    &   \mbox{ if } a=0;\\
  -q^{|B|}+q^{|B'|},    &   \mbox{ if } a\ne 0, a\in \Delta_{B}^{\bot};\\
  q^{|B'|},    &   \mbox{ if } a\notin \Delta_{B}^{\bot}, a \in \Delta_{B'}^{\bot};\\
  0,    &   \mbox{ if } a\notin \Delta_{B'}^{\bot}.
  \end{array} \right.
\end{align}
To determine Lee weight of $\C_{L}$, we consider the following three cases.

Case $(1)$: $a=0$. Then we have 
\begin{align*}
  \Omega=&\frac{2}{q}n+\frac{2}{q}(q^{m}-q^{|B|}+q^{|B'|})\sum_{u\in \fq^*}\sum_{x\in \Delta_{A}}\chi(u\Tr_{q}^{q^m}(bx))\\
  =&\left\{\begin{array}{ll}
    2q^{|A|}(q^{m}-q^{|B|}+q^{|B'|}),    &   \mbox{ if } b \in \Delta_{A}^{\bot};\\
    2q^{|A|-1}(q^{m}-q^{|B|}+q^{|B'|}),    &   \mbox{ if } b \notin \Delta_{A}^{\bot}.
    \end{array} \right.
\end{align*}
Thus, for $a=0$, one has
\begin{equation*} 
wt_{L}(c_{a+ub})=\left\{\begin{array}{ll}
  0,    &   \mbox{ if } b \in \Delta_{A}^{\bot};\\
  w_{1}:=2(q-1)q^{|A|-1}(q^{m}-q^{|B|}+q^{|B'|}),    &   \mbox{ if }  b \notin \Delta_{A}^{\bot}.
  \end{array} \right.
\end{equation*}

Case $(2)$: $a\ne 0$, $a\in \Delta_{B}^{\bot}$. Then we have  
\begin{align*}
  \Omega=&\frac{2}{q}n-\frac{1}{q}(q^{|B|}-q^{|B'|})\sum_{u\in \fq^*}\sum_{x\in \Delta_{A}}(\chi(u\Tr_{q}^{q^m}(bx))+\chi(u\Tr_{q}^{q^m}((a+b)x)))\\
  =&\left\{\begin{array}{ll}
    2q^{|A|}(q^{m-1}-q^{|B|}+q^{|B'|}),    &   \mbox{ if } b\in \Delta_{A}^{\bot}, a+b \in \Delta_{A}^{\bot};\\
    2q^{|A|-1}(q^{m}-q^{|B|}+q^{|B'|}),    &   \mbox{ if } b\notin \Delta_{A}^{\bot}, a+b \notin \Delta_{A}^{\bot};\\
    q^{|A|-1}(2q^{m}-(q+1)q^{|B|}+(q+1)q^{|B'|}),    &   \mbox{ otherwise}.
    \end{array} \right.
\end{align*}
which indicates 
\begin{equation*} 
wt_{L}(c_{a+ub})=\left\{\begin{array}{ll}
  w_{2}:=2(q-1)q^{m+|A|-1},    &   \mbox{ if } b\in \Delta_{A}^{\bot}, a+b \in \Delta_{A}^{\bot};\\
  w_{1}=2(q-1)q^{|A|-1}(q^{m}-q^{|B|}+q^{|B'|}),    &   \mbox{ if } b\notin \Delta_{A}^{\bot}, a+b \notin \Delta_{A}^{\bot};\\
  w_{3}:=(q-1)q^{|A|-1}(2q^{m}-q^{|B|}+q^{|B'|}),    &   \mbox{ otherwise}.
  \end{array} \right.
\end{equation*}

Case $(3)$: $a\notin \Delta_{B}^{\bot}$, $a \in \Delta_{B'}^{\bot}$. Then we have  
\begin{align*}
  \Omega=&\frac{2}{q}n+q^{|B'|-1}\sum_{u\in \fq^*}\sum_{x\in \Delta_{A}}(\chi(u\Tr_{q}^{q^m}(bx))+\chi(u\Tr_{q}^{q^m}((a+b)x)))\\
  =&\left\{\begin{array}{ll}
    2q^{|A|-1}(q^{m}-q^{|B|})+2q^{|A|+|B'|},    &   \mbox{ if } b\in \Delta_{A}^{\bot}, a+b \in \Delta_{A}^{\bot};\\
    2q^{|A|-1}(q^{m}-q^{|B|}+q^{|B'|}),    &   \mbox{ if } b\notin \Delta_{A}^{\bot}, a+b \notin \Delta_{A}^{\bot};\\
    2q^{|A|-1}(q^{m}-q^{|B|})+(q+1)q^{|A|+|B'|-1},    &   \mbox{ otherwise},
    \end{array} \right.
\end{align*}
which implies 
\begin{equation*} 
wt_{L}(c_{a+ub})=\left\{\begin{array}{ll}
w_{4}:=2(q-1)q^{|A|-1}(q^m-q^{|B|}),    &   \mbox{ if } b\in \Delta_{A}^{\bot}, a+b \in \Delta_{A}^{\bot};\\
w_{1}=2(q-1)q^{|A|-1}(q^{m}-q^{|B|}+q^{|B'|}),    &   \mbox{ if } b\notin \Delta_{A}^{\bot}, a+b \notin \Delta_{A}^{\bot};\\
w_{5}:=(q-1)q^{|A|-1}(2q^{m}-2q^{|B|}+q^{|B'|}),    &   \mbox{ otherwise}.
\end{array} \right.
\end{equation*}

Case $(4)$: $a\notin \Delta_{B'}^{\bot}$. Then it can be easily verified that $wt_{L}(c_{a+ub})=w_{1}=2(q-1)q^{|A|-1}(q^{m}-q^{|B|}+q^{|B'|})$ in this case.

Notice that $0<w_{4}<w_{1},w_{2},w_{3},w_{5}$ since $|B'|<|B|<m$.
Thus $wt_{L}(c_{a+ub})=0$ if and only if $a=0$ and $b \in \Delta_{A}^{\bot}$, and it can be obtained that $A_{0}=q^{m-|A|}$. This shows that the size of $\C_{L}$ is $q^{m+|A|}$.

Now we compute the Lee weight distribution of $\C_{L}$. Similar to the computation in Theorem \ref{code-2}, it follows that 
\begin{align*}
  A_{w_{2}}=&|\{(a,b)\in \fqm^{2}: a\ne 0, a\in \Delta_{B}^{\bot},b\in \Delta_{A}^{\bot}, a+b \in \Delta_{A}^{\bot}\}| \\
  =&|\{(a,b)\in \fqm^{2}: a\ne 0, a\in \Delta_{B}^{\bot},a \in \Delta_{A}^{\bot},b\in \Delta_{A}^{\bot}\}| \\
  =&(q^{m-|A\cup B|}-1)q^{m-|A|},
\end{align*}
\begin{align*}
  A_{w_{3}}=&2|\{(a,b)\in \fqm^{2}: a\ne 0, a\in \Delta_{B}^{\bot},b\in \Delta_{A}^{\bot}, a+b \notin \Delta_{A}^{\bot}\}| \\
  =&2|\{(a,b)\in \fqm^{2}: a\in \Delta_{B}^{\bot}, a \notin \Delta_{A}^{\bot}, b\in \Delta_{A}^{\bot}\}| \\
  =&2(q^{m-|B|}-q^{m-|A\cup B|})q^{m-|A|},
\end{align*}
\begin{align*}
  A_{w_{4}}=&|\{(a,b)\in \fqm^{2}: a\notin \Delta_{B}^{\bot}, a \in \Delta_{B'}^{\bot},b\in \Delta_{A}^{\bot}, a+b \in \Delta_{A}^{\bot}\}| \\
  =&|\{(a,b)\in \fqm^{2}: a\notin \Delta_{B}^{\bot}, a \in \Delta_{B'}^{\bot}, a \in \Delta_{A}^{\bot}, b\in \Delta_{A}^{\bot}\}| \\
  =&(q^{m-|A\cup B'|}-q^{m-|A\cup B|})q^{m-|A|},
\end{align*}
and
\begin{align*}
  A_{w_{5}}=&2|\{(a,b)\in \fqm^{2}: a\notin \Delta_{B}^{\bot}, a \in \Delta_{B'}^{\bot},b\in \Delta_{A}^{\bot}, a+b \notin \Delta_{A}^{\bot}\}| \\
  =&2|\{(a,b)\in \fqm^{2}: a\notin \Delta_{B}^{\bot}, a \in \Delta_{B'}^{\bot}, a \notin \Delta_{A}^{\bot}, b\in \Delta_{A}^{\bot}\}| \\
  =&2(q^{m-|B'|}-q^{m-|B|}-q^{m-|A\cup B'|}+q^{m-|A\cup B|})q^{m-|A|}.
\end{align*}

With the computation as above, we have $A_{w_{1}}=q^{m-|A|}(q^{m+|A|}-2q^{m-|B'|}+q^{m-|A\cup B'|})$ due to $\sum_{0\leq i \leq 5}A_{w_{i}}=q^{2m}$. Then the Lee weight distribution of $\C_{L}$ can be derived completely.
This completes the proof.
\end{proof}

\begin{remark}
Note that the code $\C_{L}$ in Theorem \ref{code-3} is a $4$-weight code if $A\subseteq B$; it is a $3$-weight code if $A\subseteq B'$; and it is a $2$-weight code if $A=B'$. 
\end{remark}

\begin{example}
Let $q=2$, $m=6$, $A=\{1,2,3,5\}$, $B=\{1,2,3,4\}$ and $B'=\{2\}$. Magma experiments show that $\C_{L}$ is a linear code over $\ftwo+u\ftwo$ of length $800$ and size $2^{10}$, and it has the weight enumerator $1+2z^{768}+52z^{784}+964z^{800}+4z^{912}+z^{1024}$, which is consistent with our result in Theorem \ref{code-3}.
\end{example}

\subsection{The fourth class of linear codes  $\C_{L}$ with $L=\Delta_{A}^c+u(\Delta_{B}\setminus \Delta_{B'})^c$}

\begin{thm} \label{code-4}
Let $m$ be a positive integer. Let $\Delta_{A}$, $\Delta_{B}$ and $\Delta_{B'}$ be simplicial complexes of $\fqm$, where $A\subset [m]$ and $B' \subseteq B\subset [m]$. Denote $L=\Delta_{A}^c+u(\Delta_{B}\setminus \Delta_{B'})^c$. Then $\C_{L}$ defined by \eqref{CL} is a $10$-weight code of length $(q^{m}-q^{|A|})(q^{m}-q^{|B|}+q^{|B'|})$, size $q^{2m}$, and its Lee weight distribution is given by
\begin{center} \small  
\begin{tabular}{llll}
\hline Weight $w$ & Multiplicity $A_{w}$\\ \hline
$0$ & $ 1$ \\
$2(q-1)q^{m-1}(q^{m}-q^{|B|}+q^{|B'|})$           & $q^{m-|A|}-1$ \\
$2(q-1)(q^{m-1}-q^{|A|-1})(q^{m}-q^{|B|}+q^{|B'|})$ & $q^{2m}-q^{m-|A|}(2q^{m-|B'|}-q^{m-|A\cup B'|})$ \\
$(q-1)q^{m-1}(2q^{m}-2q^{|A|}-q^{|B|}+q^{|B'|})$ & $2(q^{m-|A\cup B|}-1)$ \\
$(q-1)(q^{m-1}-q^{|A|-1})(2q^{m}-q^{|B|}+q^{|B'|})$ & $2(q^{m-|B|}-q^{m-|A\cup B|})$ \\
$2(q-1)q^{m-1}(q^{m}-q^{|A|}-q^{|B|}+q^{|B'|})$ & $(q^{m-|A\cup B|}-1)(q^{m-|A|}-2)$ \\
$(q-1)(2q^{m-1}(q^{m}-q^{|A|})-(2q^{m-1}-q^{|A|-1})(q^{|B|}-q^{|B'|}))$ & $2(q^{m-|B|}-q^{m-|A\cup B|})(q^{m-|A|}-1)$ \\
$(q-1)(2(q^{m-1}-q^{|A|-1})(q^{m}-q^{|B|})+q^{m+|B'|-1})$ & $2(q^{m-|A\cup B'|}-q^{m-|A\cup B|})$ \\
$(q-1)(q^{m-1}-q^{|A|-1})(2q^{m}-2q^{|B|}+q^{|B'|})$ & $2(q^{m-|B'|}-q^{m-|B|}-q^{m-|A\cup B'|}+q^{m-|A\cup B|})$ \\
$(q-1)(2(q^{m-1}-q^{|A|-1})(q^{m}-q^{|B|})+2q^{m+|B'|-1})$ & $(q^{m-|A\cup B'|}-q^{m-|A\cup B|})(q^{m-|A|}-2)$ \\
$(q-1)(2(q^{m-1}-q^{|A|-1})(q^{m}-q^{|B|})+(2q^{m}-q^{|A|})q^{|B'|-1})$ & $2(q^{m-|B'|}-q^{m-|B|}-q^{m-|A\cup B'|}+q^{m-|A\cup B|})(q^{m-|A|}-1)$ \\
\hline
\end{tabular}
\end{center}
\end{thm}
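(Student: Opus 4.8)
The plan is to follow the same template used in the proofs of Theorems \ref{code-1}, \ref{code-2}, and \ref{code-3}, since the defining set $L=\Delta_{A}^c+u(\Delta_{B}\setminus \Delta_{B'})^c$ is just the "doubly complemented" version of the previous sets. First I would record the length $n=(q^{m}-q^{|A|})(q^{m}-q^{|B|}+q^{|B'|})$ directly from $|L|=|\Delta_A^c|\cdot|(\Delta_B\setminus\Delta_{B'})^c|$, and invoke Lemma \ref{lem-wtL} to write $wt_L(c_{a+ub})=2n-\Omega$ with
\[
\Omega=\frac{1}{q}\sum_{u\in\fq}\sum_{y\in(\Delta_B\setminus\Delta_{B'})^c}\chi(u\Tr_q^{q^m}(ay))\sum_{x\in\Delta_A^c}\bigl(\chi(u\Tr_q^{q^m}(bx))+\chi(u\Tr_q^{q^m}((a+b)x))\bigr).
\]
The two inner character sums have already been evaluated: the $y$-sum over $(\Delta_B\setminus\Delta_{B'})^c$ is given by \eqref{eq-thm3-B'} (a four-valued function of $a$ according to whether $a=0$, $a\in\Delta_B^\bot\setminus\{0\}$, $a\in\Delta_{B'}^\bot\setminus\Delta_B^\bot$, or $a\notin\Delta_{B'}^\bot$), and the $x$-sum over $\Delta_A^c$ is given by \eqref{eq-2-2} (three-valued according to whether $b=0$, $b\in\Delta_A^\bot\setminus\{0\}$, or $b\notin\Delta_A^\bot$, and likewise for $a+b$).

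Next I would carry out the case analysis by combining these. Split on the value of $a$ into four cases exactly as in Theorem \ref{code-3} (Case 1: $a=0$; Case 2: $a\ne0$, $a\in\Delta_B^\bot$; Case 3: $a\notin\Delta_B^\bot$, $a\in\Delta_{B'}^\bot$; Case 4: $a\notin\Delta_{B'}^\bot$), and within each of Cases 1--3 subdivide according to the joint membership pattern of $b$ and $a+b$ relative to $\Delta_A^\bot$ (the three patterns $\{b,a+b\subseteq\Delta_A^\bot\}$, $\{b,a+b\notin\Delta_A^\bot\}$, "otherwise", plus the finer split singling out $b=0$ or $a+b=0$ as in Theorem \ref{code-2}). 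Each subcase gives $\Omega$ as an explicit polynomial in powers of $q$, hence an explicit Lee weight; collecting these produces the ten nonzero weights $w_1,\dots,w_{10}$ listed in the statement. I would then verify that the minimal weight among them is positive (which forces $wt_L(c_{a+ub})=0$ iff $a=b=0$, so $|\C_L|=q^{2m}$) — here the hypotheses $A\subset[m]$, $B'\subseteq B\subset[m]$ guarantee $q^m-q^{|A|}>0$ and $q^m-q^{|B|}+q^{|B'|}>0$.

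For the multiplicities, I would count, for each weight $w_i$, the number of pairs $(a,b)\in\fqm^2$ producing it, using the same two reductions already exploited in the earlier proofs: the substitution $c=a+b$ (which shows, e.g., that the count with $b\in\Delta_A^\bot$, $a+b\notin\Delta_A^\bot$ equals the count with $b\notin\Delta_A^\bot$, $a+b\in\Delta_A^\bot$), and the closure of $\Delta_A^\bot$ under addition (so that $a\in\Delta_A^\bot$ and $b\in\Delta_A^\bot$ together control membership of $a+b$). All the intersection cardinalities reduce to $|\Delta_X^\bot\cap\Delta_Y^\bot|=|\Delta_{X\cup Y}^\bot|=q^{m-|X\cup Y|}$. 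I would compute all multiplicities except the one for the "generic" weight $2(q-1)(q^{m-1}-q^{|A|-1})(q^m-q^{|B|}+q^{|B'|})$ directly, and obtain the last one from $\sum_i A_{w_i}=q^{2m}$. Finally, since every codeword occurs $A_0=1$ times here (as $|\C_L|=q^{2m}$, distinct $(a,b)$ give distinct codewords), no rescaling of frequencies is needed.

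The main obstacle will be purely bookkeeping: there are four $a$-cases times up to five $b$/$a{+}b$-subcases, and several distinct $(a,\text{subcase})$ combinations collapse to the same weight value, so I must be careful to pool the corresponding pair-counts correctly (in particular the weights $w_5$, $w_8$ etc. that arise from more than one branch) and to keep the $b=0$ versus $b\ne0,b\in\Delta_A^\bot$ distinction — which matters for the counting even when it does not change the weight — straight throughout. The consistency check $\sum_i A_{w_i}=q^{2m}$, together with cross-checking against the $q=2$, $|B|=m$ specialization of \cite{YWXZQY} and against Magma, will catch arithmetic slips.
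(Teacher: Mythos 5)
Your proposal follows essentially the same route as the paper's proof: the same reduction via Lemma \ref{lem-wtL}, the same reuse of the character sums \eqref{eq-thm3-B'} and \eqref{eq-2-2}, the same four-way case split on $a$ with subcases on $b$ and $a+b$, the same argument that the minimal nonzero weight is positive (so the size is $q^{2m}$ and $A_0=1$), and the same strategy of counting all multiplicities directly except the generic one, which is recovered from $\sum_i A_{w_i}=q^{2m}$. The only small caution is that positivity of the smallest weights (e.g.\ $2(q-1)q^{m-1}(q^m-q^{|A|}-q^{|B|}+q^{|B'|})$) needs the strict inclusions $A\subset[m]$ and $B\subset[m]$ slightly more carefully than just $q^m-q^{|A|}>0$ and $q^m-q^{|B|}+q^{|B'|}>0$, but this is a routine check that your plan would surface.
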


\begin{proof}
The length of $\C_{L}$ is $n:=(q^{m}-q^{|A|})(q^{m}-q^{|B|}+q^{|B'|})$. By Lemma \ref{lem-wtL}, for $a+ub \in \mathcal{R}\backslash \{0\}$, the Lee weight of the codeword $c_{a+ub}$ in $\C_{L}$ is
\begin{align*}
wt_{L}(c_{a+ub})=&2(q^{m}-q^{|A|})(q^{m}-q^{|B|}+q^{|B'|})-\Omega,
\end{align*}
where
\begin{align*}
\Omega=\frac{1}{q}\sum_{u\in \fq}\sum_{y \in (\Delta_{B}\setminus \Delta_{B'})^c}\chi(u\Tr_{q}^{q^m}(ay))\sum_{x\in \Delta_{A}^c}(\chi(u\Tr_{q}^{q^m}(bx))+\chi(u\Tr_{q}^{q^m}((a+b)x))).
\end{align*}

For $u\in \fq^*$, the values of  $\sum_{y \in (\Delta_{B}\setminus \Delta_{B'})^c}\chi(u\Tr_{q}^{q^m}(ay))$ and $\sum_{x\in \Delta_{A}^c}\chi(u\Tr_{q}^{q^m}(bx))$ can be given by \eqref{eq-thm3-B'} and \eqref{eq-2-2} respectively. To further determine the Lee weights of $\C_{L}$, we consider the following four cases.

Case $(1)$: $a=0$. Then we have 
\begin{align*}
  \Omega=&\frac{2}{q}n+\frac{2}{q}(q^{m}-q^{|B|}+q^{|B'|})\sum_{u\in \fq^*}\sum_{x\in \Delta_{A}^c}\chi(u\Tr_{q}^{q^m}(bx))\\
  =&\left\{\begin{array}{ll}
    2(q^{m}-q^{|A|})(q^{m}-q^{|B|}+q^{|B'|}),    &   \mbox{ if } b=0;\\
    2(q^{m-1}-q^{|A|})(q^{m}-q^{|B|}+q^{|B'|}),    &   \mbox{ if } b \ne 0, b\in \Delta_{A}^{\bot};\\
    2(q^{m-1}-q^{|A|-1})(q^{m}-q^{|B|}+q^{|B'|}),   &   \mbox{ if }   b\notin \Delta_{A}^{\bot},
    \end{array} \right.
\end{align*}
which indicates that 
\begin{equation*} 
  wt_{L}(c_{a+ub})=\left\{\begin{array}{ll}
    0,    &   \mbox{ if } b=0;\\
    w_{1}:=2(q-1)q^{m-1}(q^{m}-q^{|B|}+q^{|B'|}),    &   \mbox{ if } b \ne 0, b\in \Delta_{A}^{\bot};\\
    w_{2}:=2(q-1)(q^{m-1}-q^{|A|-1})(q^{m}-q^{|B|}+q^{|B'|}),    &   \mbox{ if }  b\notin \Delta_{A}^{\bot}.
  \end{array}\right.
  \end{equation*}

Case $(2)$: $a\ne 0$, $a\in \Delta_{B}^{\bot}$. Then we have 
\begin{align*}
  \Omega=\frac{2}{q}n-\frac{1}{q}(q^{|B|}-q^{|B'|})\sum_{u\in \fq^*}\sum_{x\in \Delta_{A}^c}(\chi(u\Tr_{q}^{q^m}(bx))+\chi(u\Tr_{q}^{q^m}((a+b)x))).
\end{align*}
In order to compute the value of $\Omega$, we discuss the following three subcases:

Subcase $(2.1)$: $b=0$. Then it leads to $a+b\ne 0$ in this case due to $a\ne 0$. By \eqref{eq-2-2}, we have 
\[\Omega=\left\{\begin{array}{ll}
  2(q^{m-1}-q^{|A|-1})q^{m}-((q+1)q^{m-1}-2q^{|A|})(q^{|B|}-q^{|B'|}),    &   \mbox{ if } a+b \ne 0, a+b\in \Delta_{A}^{\bot};\\
  (q^{m-1}-q^{|A|-1})(2q^{m}-(q+1)q^{|B|}+(q+1)q^{|B'|}),    &   \mbox{ if }  a+b\notin \Delta_{A}^{\bot},
  \end{array} \right.\]
which indicates
\begin{equation*} 
wt_{L}(c_{a+ub})=\left\{\begin{array}{ll}
  w_{3}:=(q-1)q^{m-1}(2q^{m}-2q^{|A|}-q^{|B|}+q^{|B'|}),    &   \mbox{ if } a+b \ne 0, a+b\in \Delta_{A}^{\bot};\\
  w_{4}:=(q-1)(q^{m-1}-q^{|A|-1})(2q^{m}-q^{|B|}+q^{|B'|}),    &   \mbox{ if }  a+b\notin \Delta_{A}^{\bot}.
\end{array}\right.
\end{equation*}

Subcase $(2.2)$: $b \ne 0$, $b\in \Delta_{A}^{\bot}$. One has 
\begin{align*} 
&wt_{L}(c_{a+ub})\\
&=\left\{\begin{array}{ll}
  w_{3}=(q-1)q^{m-1}(2q^{m}-2q^{|A|}-q^{|B|}+q^{|B'|}),    &   \mbox{ if } a+b=0;\\
  w_{5}:=2(q-1)q^{m-1}(q^{m}-q^{|A|}-q^{|B|}+q^{|B'|}),    &   \mbox{ if } a+b \ne 0, a+b\in \Delta_{A}^{\bot};\\
  w_{6}:=(q-1)(2q^{m-1}(q^{m}-q^{|A|})-(2q^{m-1}-q^{|A|-1})(q^{|B|}-q^{|B'|})),    &   \mbox{ if }  a+b\notin \Delta_{A}^{\bot}.
\end{array}\right.
\end{align*}

Subcase $(2.3)$: $b\notin \Delta_{A}^{\bot}$. It can be obtained that
\begin{align*} 
&wt_{L}(c_{a+ub})\\
&=\left\{\begin{array}{ll}
  w_{4}=(q-1)(q^{m-1}-q^{|A|-1})(2q^{m}-q^{|B|}+q^{|B'|}),    &   \mbox{ if } a+b=0;\\
  w_{6}=(q-1)(2q^{m-1}(q^{m}-q^{|A|})-(2q^{m-1}-q^{|A|-1})(q^{|B|}-q^{|B'|})),    &   \mbox{ if } a+b \ne 0, a+b\in \Delta_{A}^{\bot};\\
  w_{2}=2(q-1)(q^{m-1}-q^{|A|-1})(q^{m}-q^{|B|}+q^{|B'|}),    &   \mbox{ if }  a+b\notin \Delta_{A}^{\bot}.
\end{array}\right.
\end{align*}

Case $(3)$: $a\notin \Delta_{B}^{\bot}$, $a \in \Delta_{B'}^{\bot}$. Then we have 
\begin{align*}
  \Omega=\frac{2}{q}n+q^{|B'|-1}\sum_{u\in \fq^*}\sum_{x\in \Delta_{A}^c}(\chi(u\Tr_{q}^{q^m}(bx))+\chi(u\Tr_{q}^{q^m}((a+b)x))).
\end{align*}
Then we study the following three subcases to obtain the value of $\Omega$.

Subcase $(3.1)$: $b=0$. It leads to $a+b\ne 0$ in this case due to $a\notin \Delta_{B}^{\bot}$. Then it gives 
\begin{equation*} 
wt_{L}(c_{a+ub})=\left\{\begin{array}{ll}
  w_{7}:=(q-1)(2(q^{m-1}-q^{|A|-1})(q^{m}-q^{|B|})+q^{m+|B'|-1}),    &   \mbox{ if } a+b \ne 0, a+b\in \Delta_{A}^{\bot};\\
  w_{8}:=(q-1)(q^{m-1}-q^{|A|-1})(2q^{m}-2q^{|B|}+q^{|B'|}),    &   \mbox{ if }  a+b\notin \Delta_{A}^{\bot}.
\end{array}\right.
\end{equation*}

Subcase $(3.2)$: $b \ne 0$, $b\in \Delta_{A}^{\bot}$. One has 
\begin{align*} 
&wt_{L}(c_{a+ub})\\
&=\left\{\begin{array}{ll}
  w_{7}=(q-1)(2(q^{m-1}-q^{|A|-1})(q^{m}-q^{|B|})+q^{m+|B'|-1}),    &   \mbox{ if } a+b=0;\\
  w_{9}:=(q-1)(2(q^{m-1}-q^{|A|-1})(q^{m}-q^{|B|})+2q^{m+|B'|-1}),    &   \mbox{ if } a+b \ne 0, a+b\in \Delta_{A}^{\bot};\\
  w_{10}:=(q-1)(2(q^{m-1}-q^{|A|-1})(q^{m}-q^{|B|})+(2q^{m}-q^{|A|})q^{|B'|-1}),    &   \mbox{ if }  a+b\notin \Delta_{A}^{\bot}.
\end{array}\right.
\end{align*}

Subcase $(3.3)$: $b\notin \Delta_{A}^{\bot}$. It can be derived that
\begin{align*} 
&wt_{L}(c_{a+ub})\\
&=\left\{\begin{array}{ll}
  w_{8}=(q-1)(q^{m-1}-q^{|A|-1})(2q^{m}-2q^{|B|}+q^{|B'|}),    &   \mbox{ if } a+b=0;\\
  w_{10}=(q-1)(2(q^{m-1}-q^{|A|-1})(q^{m}-q^{|B|})+(2q^{m}-q^{|A|})q^{|B'|-1}),    &   \mbox{ if } a+b \ne 0, a+b\in \Delta_{A}^{\bot};\\
  w_{2}=2(q-1)(q^{m-1}-q^{|A|-1})(q^{m}-q^{|B|}+q^{|B'|}),    &   \mbox{ if }  a+b\notin \Delta_{A}^{\bot}.
\end{array}\right.
\end{align*}

Case $(4)$: $a\notin \Delta_{B'}^{\bot}$. It gives $wt_{L}(c_{a+ub})=w_{2}=2(q-1)(q^{m-1}-q^{|A|-1})(q^{m}-q^{|B|}+q^{|B'|})$ directly in this case.

Notice that $0<w_{5}<w_{1},w_{2},w_{3},w_{4},w_{6}$ and $0<w_{8}<w_{7},w_{9},w_{10}$. By a straightforward computation, one has $w_{5}\leq w_{8}$ if and only if $m-|A|\leq |B|-|B'|$. Therefore $wt_{L}(c_{a+ub})=0$ if and only if $a=b=0$, which implies $A_{0}=1$. This shows that the size of $\C_{L}$ is $q^{2m}$.

Now we compute the Lee weight distribution of $\C_{L}$. According to the discussion on the Lee weights $wt_{L}(c_{a+ub})$ of $\C_{L}$, we can get that $A_{w_{1}}=q^{m-|A|}-1$, $A_{w_{3}}=2(q^{m-|A\cup B|}-1)$, $A_{w_{4}}=2(q^{m-|B|}-q^{m-|A\cup B|})$, $A_{w_{5}}=(q^{m-|A\cup B|}-1)(q^{m-|A|}-2)$, $A_{w_{6}}=2(q^{m-|B|}-q^{m-|A\cup B|}) (q^{m-|A|}-1)$, $A_{w_{7}}=2(q^{m-|A\cup B'|}-q^{m-|A\cup B|})$, $A_{w_{8}}=2(q^{m-|B'|}-q^{m-|B|}-q^{m-|A\cup B'|}+q^{m-|A\cup B|})$, $A_{w_{9}}=(q^{m-|A\cup B'|}-q^{m-|A\cup B|})(q^{m-|A|}-2)$ and $A_{w_{10}}=2(q^{m-|B'|}-q^{m-|B|}-q^{m-|A\cup B'|}+q^{m-|A\cup B|})(q^{m-|A|}-1)$. Here we omit the detailed computations since this can be derived similarly to the computation of $A_{w_{i}}$'s in Theorem \ref{code-2}. Furthermore, it gives $A_{w_{2}}=q^{2m}-q^{m-|A|}(2q^{m-|B'|}-q^{m-|A\cup B'|})$ by $\sum_{0\leq i \leq 10}A_{w_{i}}=q^{2m}$. Then the Lee weight distribution of $\C_{L}$ is completely determined. This completes the proof.

\end{proof}

\begin{remark}
Note that the code $\C_{L}$ in Theorem \ref{code-4} is an $8$-weight code if $A\subseteq B$; it is a $6$-weight code if $A\subseteq B'$; and it is an $8$-weight code if $|A\cup B'|=|A\cup B|$.
\end{remark}

\begin{example}
Let $q=2$, $m=6$, $A=\{1,2,3,5\}$, $B=\{1,2,3,4\}$ and $B'=\{2\}$. Magma experiments show that $\C_{L}$ is a linear code over $\ftwo+u\ftwo$ of length $2400$ and size $2^{12}$, and it has the weight enumerator $1+2z^{2176}+12z^{2288}+52z^{2352}+4z^{2368}+3856z^{2400}+156z^{2416}+4z^{2432}+2z^{2624}+4z^{2736}+3z^{3200}$, which is consistent with our result in Theorem \ref{code-4}.
\end{example}

\section{Optimal codes over $\fq$ and examples} \label{sect-4}
It's known that the Gray map $\phi $ introduced in Section \ref{sect-2} is an isometry from $(R^{n},d_{L})$ and $(\fq^{2n},d_{H})$, which is distance-preserving and weight-preserving. In this section, we will investigate the Gray images $\phi(\C_{L})$ of the codes $\C_{L}$ over $R=\fq+u\fq$ ($u^2=0$) constructed in Section \ref{sect-3}. By using the Griesmer bound, several calsses of optimal few-weight linear codes over $\fq$ and some examples will be presented.

\begin{thm} \label{thm1-cor1}
Let $\C_{L}$ be defined as in Theorem \ref{code-1} with $|A\cup B|=|B|$. Assume that $|A|+|B'|> 0$. Then the Gray image $\phi(\C_{L})$ is a $[2q^{|A|}(q^{|B|}-q^{|B'|}),|A|+|B|,2(q-1)q^{|A|-1}(q^{|B|}-q^{|B'|})]$ linear code over $\fq$ with the weight distribution 
\begin{center} 
\begin{tabular}{llll}
\hline Weight $w$ & Multiplicity $A_{w}$\\ \hline
$0$ & $1$ \\
$2(q-1)q^{|A|+|B|-1}$               & $q^{|B|-|A\cup B'|}-1$ \\
$2(q-1)q^{|A|-1}(q^{|B|}-q^{|B'|})$ & $q^{|A|+|B|}-2q^{|B|-|B'|}+q^{|B|-|A\cup B'|}$ \\ 
$(q-1)q^{|A|-1}(2q^{|B|}-q^{|B'|})$ & $2(q^{|B|-|B'|}-q^{|B|-|A\cup B'|})$ \\
\hline
\end{tabular}
\end{center}
Moreover, the code $\phi(\C_{L})$ is a near Griesmer code and it is distance-optimal.
\end{thm}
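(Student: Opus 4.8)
The plan is to derive the weight distribution of $\phi(\C_L)$ directly from Theorem~\ref{code-1} by specializing $|A\cup B|=|B|$, and then to compare the resulting parameters against the Griesmer bound. First I would observe that, since the Gray map $\phi$ is a distance-preserving and weight-preserving isometry from $(R^n,d_L)$ to $(\fq^{2n},d_H)$, the code $\phi(\C_L)$ has length $2n=2q^{|A|}(q^{|B|}-q^{|B'|})$, the same cardinality $q^{|A|+|A\cup B|}=q^{|A|+|B|}$ as $\C_L$ (hence dimension $|A|+|B|$, as the map is $\fq$-linear on coordinates and $\C_L$ is free of that rank), and a Hamming weight distribution obtained from the Lee weight distribution in Theorem~\ref{code-1} by substituting $|A\cup B|=|B|$. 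Carrying out that substitution: the weight $2(q-1)q^{|A|+|B|-1}$ keeps multiplicity $q^{|B|-|A\cup B'|}-1$; the weight $2(q-1)q^{|A|-1}(q^{|B|}-q^{|B'|})$ keeps $q^{|A|+|B|}-2q^{|B|-|B'|}+q^{|B|-|A\cup B'|}$; the weight $(q-1)q^{|A|-1}(q^{|B|}-q^{|B'|})$ acquires multiplicity $2(q^{|A\cup B|-|B|}-1)=2(q^0-1)=0$ and therefore disappears; and the weight $(q-1)q^{|A|-1}(2q^{|B|}-q^{|B'|})$ acquires multiplicity $2(q^{|B|-|B'|}-q^{|A\cup B|-|B|}-q^{|A\cup B|-|A\cup B'|}+1)=2(q^{|B|-|B'|}-q^{|B|-|A\cup B'|})$. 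This yields exactly the four-line table in the statement; I would also check the total $\sum A_w = q^{|A|+|B|}$ as a sanity verification.

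Next I would identify the minimum distance. Among the three nonzero weights, the candidates are $d_1=2(q-1)q^{|A|+|B|-1}$, $d_2=2(q-1)q^{|A|-1}(q^{|B|}-q^{|B'|})$, and $d_3=(q-1)q^{|A|-1}(2q^{|B|}-q^{|B'|})$. Since $q^{|B|}-q^{|B'|}<q^{|B|}$ we have $d_2<d_1$, and since $2q^{|B|}-q^{|B'|}>2(q^{|B|}-q^{|B'|})$ we have $d_2<d_3$; moreover one must confirm $A_{d_2}>0$, which holds because $q^{|A|+|B|}-2q^{|B|-|B'|}+q^{|B|-|A\cup B'|}>0$ given $|A|\ge 0$, $|B'|<|B|$ and $|A\cup B'|\le |B|$ (a short inequality check). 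Hence $d=d_2=2(q-1)q^{|A|-1}(q^{|B|}-q^{|B'|})$, confirming the claimed parameters $[2q^{|A|}(q^{|B|}-q^{|B'|}),\,|A|+|B|,\,2(q-1)q^{|A|-1}(q^{|B|}-q^{|B'|})]$.

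The core of the argument is the Griesmer-bound computation. I would compute $g(k,d)=\sum_{i=0}^{k-1}\lceil d/q^i\rceil$ with $k=|A|+|B|$ and $d=2(q-1)q^{|A|-1}(q^{|B|}-q^{|B'|})$. Writing $d=2(q-1)(q^{|A|+|B|-1}-q^{|A|+|B'|-1})$, for $0\le i\le |A|+|B'|-1$ the term $d/q^i$ is an integer, and for $|A|+|B'|\le i\le |A|+|B|-1$ one has to evaluate the ceilings of $2(q-1)(q^{|A|+|B|-1-i}-q^{|A|+|B'|-1-i})$ where the second summand is a negative power of $q$ times an integer; the key elementary fact is $\lceil 2(q-1)q^{j} - 2(q-1)q^{j-t}\rceil = 2(q-1)q^j - 2(q-1)q^{j-t} + (\text{small correction})$ that I would track carefully via a geometric-series telescoping, exactly as in the analogous computations in \cite{YWXZQY,WYHJ}. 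The expected outcome is $g(k,d)=2q^{|A|}(q^{|B|}-q^{|B'|})-1=2n-1$, so that the length $2n$ exceeds the Griesmer bound by exactly $1$: this makes $\phi(\C_L)$ a near Griesmer code. Distance-optimality then follows because any $[2n,k,d+1]$ code would need length at least $g(k,d+1)\ge g(k,d)+ \lceil (d+1)/q^{k-1}\rceil\cdot(\text{positive})$; more concretely one shows $g(k,d+1)>2n$, so no $[2n,k,d+1]$ code exists. The main obstacle is precisely this Griesmer summation: getting the ceiling functions right across the ``break point'' $i=|A|+|B'|-1$ and confirming the off-by-one, rather than anything structural; the rest is bookkeeping inherited from Theorem~\ref{code-1} and the isometry property of $\phi$.
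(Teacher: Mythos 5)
Your proposal is correct and follows essentially the same route as the paper: specialize the Lee weight distribution of Theorem \ref{code-1} at $|A\cup B|=|B|$, identify $d=2(q-1)q^{|A|-1}(q^{|B|}-q^{|B'|})$ as the smallest nonzero weight with positive frequency, and evaluate $g(|A|+|B|,d)=2q^{|A|}(q^{|B|}-q^{|B'|})-1$ and $g(|A|+|B|,d+1)>2n$ by tracking the ceilings across the break point $i=|A|+|B'|$. The one nit is that your justification of $A_d>0$ should explicitly invoke the hypothesis $|A|+|B'|>0$ (for $|A|=|B'|=0$ the quantity $q^{|A|+|B|}-2q^{|B|-|B'|}+q^{|B|-|A\cup B'|}$ vanishes), which is exactly how the paper argues it.
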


\begin{proof}
According to Theorem \ref{code-1}, the weight distribution of $\phi(\C_{L})$ can be given as in Theorem \ref{thm1-cor1} for the case $|A\cup B|=|B|$. It's clear that the minimum distance of $\phi(\C_{L})$ is $d=2(q-1)q^{|A|-1}(q^{|B|}-q^{|B'|})$, and $A_{d}=q^{2m-|A|-|A\cup B'|}(q^{|A|+|A\cup B'|}-2q^{|A\cup B'|-|B'|}+1)>0$ due to $(|A|+|A\cup B'|)-(|A\cup B'|-|B'|)=|A|+|B'|>0$. Then $\phi(\C_{L})$ has parameters $[2q^{|A|}(q^{|B|}-q^{|B'|}),|A|+|B|,d:=2(q-1)q^{|A|-1}(q^{|B|}-q^{|B'|})]$. 

By the Griesmer bound, we have
\begin{align} \label{eq-thm1-gmd}
g(|A|+|B|,d)=&\sum_{i=0}^{|A|+|B|-1} \lceil \frac{2(q-1)q^{|A|-1}(q^{|B|}-q^{|B'|})}{q^i}\rceil \nonumber\\
=&\sum_{i=0}^{|A|+|B'|-1}2(q-1)(q^{|A|+|B|-i-1}-q^{|A|+|B'|-i-1}) \nonumber\\
&+\sum_{i=|A|+|B'|}^{|A|+|B|-1}(2(q-1)q^{|A|+|B|-i-1}+\lceil -2(q-1)q^{|A|+|B'|-i-1}\rceil)  \nonumber\\
=&2q^{|A|}(q^{|B|}-q^{|B'|})-1,
\end{align}
where the last equality holds since $\lceil -2(q-1)q^{|A|+|B'|-i-1}\rceil=-1$ if $i=|A|+|B'|$ and it is equal to $0$ if $|A|+|B'|<i\leq |A|+|B|-1$.
Thus $\phi(\C_{L})$ is a near Griesmer code by \eqref{eq-thm1-gmd}. Moreover, we similarly have
\begin{align} \label{eq-thm1-gmd+1}
  g(|A|+|B|,d+1)=&\sum_{i=0}^{|A|+|B|-1} \lceil \frac{2(q-1)q^{|A|-1}(q^{|B|}-q^{|B'|})+1}{q^i}\rceil \nonumber\\
  =&2q^{|A|}(q^{|B|}-q^{|B'|})+|A|+|B'|+\lfloor \frac{2}{q}\rfloor-1,
\end{align}
which implies $\phi(\C_{L})$ is distance-optimal due to $2q^{|A|}(q^{|B|}-q^{|B'|})<g(|A|+|B|,d+1)$. This completes the proof.
\end{proof}

\begin{remark}
When $q=2$ and $|B|=m$, the optimal codes $\phi(\C_{L})$ in Theorem \ref{thm1-cor1} are reduced to the binary optimal linear codes in \cite[Theorem 4.1]{YWXZQY}. 
\end{remark}

\begin{example}
Let $q=3$, $m=4$, $A=\{1\}$, $B=\{1,2,3\}$ and $B'=\{2\}$. Magma experiments show that $\phi(\C_{L})$ is a $[144,4,96]$ linear code over ${\mathbb F}_{3}$ with the weight enumerator $1+66z^{96}+12z^{102}+2z^{108}$, which is consistent with our result in Theorem \ref{thm1-cor1}. This code is a near Griesmer code by the Griesmer bound and is optimal due to \cite{GMB}. 
\end{example}

By characterizing the optimality of the codes in Theorem \ref{code-2}, we get a class of six-weight linear codes in the following, which can produce many optimal linear codes over $\fq$.
\begin{thm} \label{thm2-cor1}
Let $\C_{L}$ be defined as in Theorem \ref{code-2} with $|B|=m$. Assume that $|A\cup B'|<m$ and $q^{m-|A|}>2$. Then the Gray image $\phi(\C_{L})$ is a $[2(q^{m}-q^{|A|})(q^{m}-q^{|B'|}),2m,2(q-1)(q^{2m-1}-q^{m+|A|-1}-q^{m+|B'|-1})]$ linear codes with the weight distribution 
\begin{center}\small 
\begin{tabular}{llll} 
\hline Weight $w$ & Multiplicity $A_{w}$\\ \hline
$0$ & $ 1$ \\
$2(q-1)q^{m-1}(q^{m}-q^{|B'|})$ & $q^{m-|A|}-1$ \\
$2(q-1)(q^{m-1}-q^{|A|-1})(q^{m}-q^{|B'|})$ & $q^{2m}-q^{m-|A|}(2q^{m-|B'|}-q^{m-|A\cup B'|})$ \\
$(q-1)(2q^{2m-1}-2q^{m+|A|-1}-q^{m+|B'|-1})$ & $2(q^{m-|A\cup B'|}-1)$ \\
$(q-1)(q^{m-1}-q^{|A|-1})(2q^{m}-q^{|B'|})$ & $2(q^{m-|B'|}-q^{m-|A\cup B'|})$ \\
$2(q-1)(q^{2m-1}-q^{m+|A|-1}-q^{m+|B'|-1})$ & $(q^{m-|A\cup B'|}-1)(q^{m-|A|}-2)$ \\
$(q-1)(2(q^{m-1}-q^{|A|-1})(q^{m}-q^{|B'|})-q^{|A|+|B'|-1})$ & $2(q^{m-|B'|}-q^{m-|A\cup B'|})(q^{m-|A|}-1)$ \\
\hline
\end{tabular}
\end{center}
Moreover, the code $\phi(\C_{L})$ is distance-optimal if $2q^{|A|+|B'|}<m+\min\{|A|,|B'|\}+\delta$, where
\begin{equation} \label{cor1-delta}
\delta=\left\{\begin{array}{ll}
1, & \mbox{if $q=2$},  \\
-1,  & \mbox{if $|A|=|B'|$ and $q>4$},  \\
0,  & \mbox{otherwise}.
\end{array}\right.
\end{equation}
\end{thm}
  
\begin{proof}
According to Theorem \ref{code-2}, $\phi(\C_{L})$ is reduced to a $6$-weight code for the case $|B|=m$, and its weight distribution follows as in Theorem \ref{thm2-cor1}. It can be verified that the minimum distance of $\phi(\C_{L})$ is $d=2(q-1)(q^{2m-1}-q^{m+|A|-1}-q^{m+|B'|-1})$, and $A_{d}=(q^{m-|A\cup B'|}-1)(q^{m-|A|}-2)>0$ due to $|A\cup B'|<m$ and $q^{m-|A|}>2$. Then $\phi(\C_{L})$ has parameters $[2(q^{m}-q^{|A|})(q^{m}-q^{|B'|}),2m,2(q-1)(q^{2m-1}-q^{m+|A|-1}-q^{m+|B'|-1})]$. 

With detailed computation by using the Griesmer bound, it gives
\begin{equation} \label{eq-gmd-cor1-1}
g(2m,d)=\left\{\begin{array}{ll}
2(q^{2m}-q^{m+|A|}-q^{m+|B'|})-1, & \mbox{if $|A|=|B'|$ and $q\not=3$};  \\
2(q^{2m}-q^{m+|A|}-q^{m+|B'|}),  & \mbox{otherwise},
\end{array}\right.
\end{equation}
and
\[g(2m,d+1)=2(q^{2m}-q^{m+|A|}-q^{m+|B'|})+m+\min\{|A|,|B'|\}+\delta,\]
where $\delta $ is defined as in \eqref{cor1-delta}.
Then $\phi(\C_{L})$ is distance-optimal if $2q^{|A|+|B'|}<m+\min\{|A|,|B'|\}+\delta$. This completes the proof.
\end{proof}
\begin{remark}
It should be noted that for the case $|A|=|B'|=0$ the code $\phi(\C_{L})$ in Corollary \ref{thm2-cor1} is close to the Griesmer bound since $n-g(2m,d)=2 \mbox{ or } 3$ by \eqref{eq-gmd-cor1-1}. Morover, the given condition in Theorem \ref{thm2-cor1} for the code $\phi(\C_{L})$ to be distance-optimal can be easily satisfied if $|A|+|B'|$ is small enough and $m$ is large enough. Thus many distance-optimal linear codes over $\fq$ can be derived from this construction.
\end{remark}


\begin{example}
Let $q=2$, $m=4$, $A=\{2\}$, $B=\{1,2,3,4\}$ and $B'=\emptyset$. Magma experiments show that $\phi(\C_{L})$ is a $[420,8,208]$ linear code over ${\mathbb F}_{2}$ with the weight enumerator $1+42z^{208}+112z^{209}+64z^{210}+14z^{216}+16z^{217}+7z^{240}$, which is consistent with our result in Theorem \ref{thm2-cor1}. This code is distance-optimal by the Griesmer bound. 
\end{example}

For a special case of Theorem \ref{code-2}, we can get a family of two-weight optimal codes as follows.
\begin{thm} \label{thm2-cor3}
Let $\C_{L}$ be defined as in Theorem \ref{code-2} with $q=2$, $|B|=m$, $A=B'$ and $|A|=|B'|=m-1$. Then the Gray image $\phi(\C_{L})$ is a $[2^{2m-1},2m,2^{2m-2}]$ linear code over $\ftwo$ with the weight enumerator $1+z^{2^{2m-1}}+(2^{2m}-2)z^{2^{2m-2}}$. This code $\phi(\C_{L})$ is a Griesmer code.
\end{thm}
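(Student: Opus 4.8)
The plan is to specialize Theorem \ref{code-2} to the parameters $q=2$, $|B|=m$, $A=B'$, and $|A|=|B'|=m-1$, read off the resulting weight distribution of $\C_{L}$, then push it through the Gray map and verify the Griesmer bound is met with equality. First I would substitute these values into the length formula: $n_{\mathcal R}=(q^m-q^{|A|})(q^{|B|}-q^{|B'|})=(2^m-2^{m-1})(2^m-2^{m-1})=2^{m-1}\cdot 2^{m-1}=2^{2m-2}$, so the Gray image $\phi(\C_L)$ has length $2\cdot 2^{2m-2}=2^{2m-1}$, and its dimension is $2m$ since the size of $\C_L$ is $q^{2m}=2^{2m}$ by Theorem \ref{code-2}. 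The key simplification is that most of the nine nonzero weight-multiplicities in Theorem \ref{code-2} collapse: since $A=B'\subseteq B$ we have $|A\cup B|=|B|=m$, $|A\cup B'|=|A|=m-1$, and $|B'|=m-1$, so terms like $q^{m-|A\cup B|}-1=2^0-1=0$ and $q^{m-|B'|}-q^{m-|B|}=2-1=1$, $q^{m-|A\cup B'|}-q^{m-|A\cup B|}=2-1=1$, etc. I expect exactly two nonzero weights to survive with positive multiplicity, as the remark preceding the statement already flags (``$5$-weight code if $A\subseteq B'$'', and here the extra collapses from $q=2$ and $|A|=m-1$ bring it down further to $2$ weights). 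The surviving data should be one codeword of weight $2^{2m-1}$ (the all-$u$-scaled word, i.e. $A_{w_6}$-type maximum or the full-length word) and $2^{2m}-2$ codewords of weight $2^{2m-2}$.

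Concretely, I would tabulate each of the nine rows of Theorem \ref{code-2}'s table under these substitutions, compute both the weight value $w$ and the multiplicity $A_w$, and collect equal weights. The weight $2(q-1)q^{m-1}(q^{|B|}-q^{|B'|})=2\cdot 1\cdot 2^{m-1}\cdot 2^{m-1}=2^{2m-1}$ appears (this is the full length, corresponding to $b\ne0$, $b\in\Delta_A^\perp$, $a+b=0$ type); its multiplicity works out to $q^{m-|A|}-1=2^1-1=1$. Every other genuinely-occurring weight should evaluate to $(q-1)(2q^{m-1}-q^{|A|-1})(q^{|B|}-q^{|B'|})$-type or $2(q-1)(q^{m-1}-q^{|A|-1})(q^{|B|}-q^{|B'|})$-type expressions which, with $q=2$ and $|A|=m-1$, become $(2^{m-1}-2^{m-2})\cdot 2^{m-1}\cdot 2 = 2^{m-2}\cdot 2^m=2^{2m-2}$ (or the same value via the other branch). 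Summing multiplicities and using $\sum A_{w_i}+1=q^{2m}=2^{2m}$ forces the weight-$2^{2m-2}$ class to have multiplicity $2^{2m}-2$. This yields the weight enumerator $1+z^{2^{2m-1}}+(2^{2m}-2)z^{2^{2m-2}}$ and in particular minimum distance $d=2^{2m-2}$.

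Finally I would verify optimality against the Griesmer bound: compute $g(2m,2^{2m-2})=\sum_{i=0}^{2m-1}\lceil 2^{2m-2}/2^i\rceil$. For $0\le i\le 2m-2$ the term is $2^{2m-2-i}$, summing to $2^{2m-1}-1$; for $i=2m-1$ the term is $\lceil 2^{2m-2}/2^{2m-1}\rceil=\lceil 1/2\rceil=1$. Hence $g(2m,2^{2m-2})=2^{2m-1}-1+1=2^{2m-1}=n$, so $\phi(\C_L)$ meets the Griesmer bound with equality and is a Griesmer code. The main (minor) obstacle is purely bookkeeping: carefully checking which of the nine rows of Theorem \ref{code-2} have nonzero multiplicity under the degeneracy $|A|=|B'|=m-1$, $q=2$ — several rows vanish identically and one must confirm no additional distinct weight value sneaks in with positive count — but no new idea is needed beyond substitution and the sum-to-$q^{2m}$ consistency check already used in the proof of Theorem \ref{code-2}.
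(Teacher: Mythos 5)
Your proposal is correct and follows exactly the route the paper takes (the paper simply states that the result follows directly from Theorem \ref{code-2} and omits the details): substituting $q=2$, $|B|=m$, $A=B'$, $|A|=m-1$ into the table of Theorem \ref{code-2} kills all rows except the one of weight $2^{2m-1}$ (multiplicity $1$) and the two rows that both evaluate to $2^{2m-2}$ (multiplicities $2^{2m}-4$ and $2$), and your Griesmer-bound computation $g(2m,2^{2m-2})=2^{2m-1}=n$ is right. The only nitpick is that you quote the multiplicity of the weight-$2^{2m-1}$ row as $q^{m-|A|}-1$ rather than the table's $q^{m-|A\cup B|}(q^{m-|A|}-1)-(q^{m-|A\cup B|}-1)$, but these coincide here since $q^{m-|A\cup B|}=1$.
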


\begin{proof}
This result can be proved directly by Theorem \ref{code-2} and thus we
omit the proof here.
\end{proof}
\begin{remark}
The optimal code $\phi(\C_{L})$ in Theorem \ref{thm2-cor3} are the same as the binary optimal linear codes in \cite[Theorem 4.3]{YWXZQY}. 
\end{remark}

\begin{example}
Let $q=2$, $m=4$, $A=\{1,2,3\}$, $B=\{1,2,3,4\}$ and $B'=\{1,2,3\}$. Magma experiments show that $\phi(\C_{L})$ is a $[128,8,64]$ linear code over ${\mathbb F}_{2}$ with the weight enumerator $1+254z^{64}+z^{128}$, which is consistent with our result in Theorem \ref{thm2-cor3}. This code is a Griesmer code by the Griesmer bound and it is optimal due to \cite{GMB}. 
\end{example}

In the following theorem, we investigate the optimality of the codes in Theorem \ref{code-3}, and thus many distance-optimal linear codes over $\fq$ can be derived from our contruction.
\begin{thm} \label{thm3-cor1}
Let $\C_{L}$ be defined as in Theorem \ref{code-3}. Assume that $|A\cup B'| \ne |A\cup B|$. Then the Gray image $\phi(\C_{L})$ is a $5$-weight $[2q^{|A|}(q^m-q^{|B|}+q^{|B'|}),m+|A|,2(q-1)q^{|A|-1}(q^m-q^{|B|})]$ linear code over $\fq$ with the weight distribution\end{thm}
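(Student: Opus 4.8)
The plan is to transport the entire statement through the Gray map. Recall from Section~\ref{sect-2} that $\phi$ is a weight-preserving, $\fq$-linear isometry from $(R^{n},d_{L})$ onto $(\fq^{2n},d_{H})$. Hence $\phi(\C_{L})$ is an $\fq$-linear code whose length is twice that of $\C_{L}$, whose number of codewords equals $|\C_{L}|$, and whose Hamming weight distribution is identical to the Lee weight distribution of $\C_{L}$. By Theorem~\ref{code-3}, $\C_{L}$ has length $q^{|A|}(q^{m}-q^{|B|}+q^{|B'|})$ and size $q^{m+|A|}$, so $\phi(\C_{L})$ has length $2q^{|A|}(q^{m}-q^{|B|}+q^{|B'|})$ and $\fq$-dimension $m+|A|$, and it carries exactly the five nonzero weights, together with their multiplicities, listed in Theorem~\ref{code-3}. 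Rewriting those weights in increasing order yields the weight distribution asserted here.

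It then remains only to identify the minimum distance. Recall from the proof of Theorem~\ref{code-3} that, using $|B'|<|B|<m$, one has $0<w_{4}<w_{1},w_{2},w_{3},w_{5}$, so $w_{4}=2(q-1)q^{|A|-1}(q^{m}-q^{|B|})$ is the smallest nonzero weight. Its multiplicity, as recorded in Theorem~\ref{code-3}, is $q^{m-|A\cup B'|}-q^{m-|A\cup B|}$, which is strictly positive precisely when $m-|A\cup B'|>m-|A\cup B|$; since $B'\subseteq B$ forces $A\cup B'\subseteq A\cup B$, this is exactly the hypothesis $|A\cup B'|\ne|A\cup B|$. Therefore $w_{4}$ is an attained weight, the minimum distance of $\phi(\C_{L})$ equals $d=2(q-1)q^{|A|-1}(q^{m}-q^{|B|})$, and $\phi(\C_{L})$ is a genuine $5$-weight $[\,2q^{|A|}(q^{m}-q^{|B|}+q^{|B'|}),\,m+|A|,\,2(q-1)q^{|A|-1}(q^{m}-q^{|B|})\,]$ code over $\fq$, as claimed.

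Should the statement additionally assert that $\phi(\C_{L})$ is (near) Griesmer or distance-optimal, the remaining step would be a Griesmer-bound computation in the spirit of the proof of Theorem~\ref{thm1-cor1} and equation~\eqref{eq-thm1-gmd}: one evaluates $g(m+|A|,d)=\sum_{i=0}^{m+|A|-1}\lceil d/q^{i}\rceil$ by splitting the index range according to whether $d/q^{i}$ is an integer, an integer plus a proper fraction contributed by the $q^{|B|}$ term, or strictly between two consecutive powers of $q$, and then compares $g(m+|A|,d)$ and $g(m+|A|,d+1)$ with the length $2q^{|A|}(q^{m}-q^{|B|}+q^{|B'|})$. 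I expect the only mildly delicate point of the whole argument to be this ceiling-function bookkeeping---in particular, tracking the contribution of the $-q^{|B|}$ summand to each term of the Griesmer sum---while everything else follows immediately from Theorem~\ref{code-3} together with the weight-preserving and $\fq$-linearity properties of $\phi$.
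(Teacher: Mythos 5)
Your argument matches the paper's proof: both read off the length, dimension, and weight distribution of $\phi(\C_{L})$ from Theorem \ref{code-3} via the weight-preserving Gray map, and both identify the minimum distance as $w_{4}=2(q-1)q^{|A|-1}(q^{m}-q^{|B|})$ by observing that its multiplicity $q^{m-|A\cup B'|}-q^{m-|A\cup B|}$ is strictly positive exactly under the hypothesis $|A\cup B'|\ne|A\cup B|$. The only step you leave conditional---the Griesmer-bound evaluation showing $\phi(\C_{L})$ is distance-optimal when $2q^{|A|+|B'|}<|A|+|B|+\lfloor 2/q\rfloor-1$---is in fact part of the paper's full theorem (the sentence following the weight table) and is carried out there by computing $g(m+|A|,d)=2q^{|A|}(q^{m}-q^{|B|})-1$ and $g(m+|A|,d+1)=2q^{|A|}(q^{m}-q^{|B|})+|A|+|B|+\lfloor 2/q\rfloor-1$, exactly along the lines you sketch.
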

\begin{center} \normalsize 
\begin{tabular}{llll}
\hline Weight $w$ & Multiplicity $A_{w}$\\ \hline
$0$ & $ 1$ \\
$2(q-1)q^{|A|-1}(q^{m}-q^{|B|}+q^{|B'|})$  & $q^{m+|A|}-2q^{m-|B'|}+q^{m-|A\cup B'|}$ \\
$2(q-1)q^{m+|A|-1}$ & $q^{m-|A\cup B|}-1$ \\
$(q-1)q^{|A|-1}(2q^{m}-q^{|B|}+q^{|B'|})$ & $2(q^{m-|B|}-q^{m-|A\cup B|})$ \\
$2(q-1)q^{|A|-1}(q^m-q^{|B|})$ & $q^{m-|A\cup B'|}-q^{m-|A\cup B|}$ \\
$(q-1)q^{|A|-1}(2q^{m}-2q^{|B|}+q^{|B'|})$ & $2(q^{m-|B'|}-q^{m-|B|}-q^{m-|A\cup B'|}+q^{m-|A\cup B|})$ \\
\hline
\end{tabular}
\end{center}
The code $\phi(\C_{L})$ is distance-optimal if $2q^{|A|+|B'|}<|A|+|B|+\lfloor \frac{2}{q}\rfloor-1$.

\begin{proof}
According to proof of Theorem \ref{code-3}, the minimum distance of $\phi(\C_{L})$ is $d=2(q-1)q^{|A|-1}(q^m-q^{|B|})$, and $A_{d}=q^{m-|A\cup B'|}-q^{m-|A\cup B|}>0$ due to $|A\cup B'| \ne |A\cup B|$. Thus the parameters are determied completely. Moreover, the weight distribution of 
$\phi(\C_{L})$ is the same as the Lee weight distribution of $\C_{L}$.

Now we investigate the optimality of $\phi(\C_{L})$. Due to \eqref{eq-thm1-gmd} and \eqref{eq-thm1-gmd+1}, we have 
\[g(m+|A|,d)=\sum_{i=0}^{m+|A|-1} \lceil \frac{2(q-1)q^{|A|-1}(q^m-q^{|B|})}{q^i}\rceil=2q^{|A|}(q^{m}-q^{|B|})-1\]
and 
\[g(m+|A|,d+1)=\sum_{i=0}^{m+|A|-1} \lceil \frac{2(q-1)q^{|A|-1}(q^m-q^{|B|})+1}{q^i}\rceil=2q^{|A|}(q^{m}-q^{|B|})+|A|+|B|+\lfloor \frac{2}{q}\rfloor-1.\]
Therefore the code $\phi(\C_{L})$ is distance-optimal if $2q^{|A|+|B'|}<|A|+|B|+\lfloor \frac{2}{q}\rfloor-1$.
This completes the proof.
\end{proof}

\begin{remark}
The code $\phi(\C_{L})$ in Theorem \ref{thm3-cor1} is close to the Griesmer bound if $|A|=|B'|=0$ since $n-g(m+|A|,d)=3$ in this case. Morover, observe that the given condition in Theorem \ref{thm3-cor1} for the code $\phi(\C_{L})$ to be distance-optimal can be easily satisfied if $|B|-|B'|$ is large enough.
\end{remark}
  
\begin{example}
Let $q=2$, $m=6$, $A=\{4\}$, $B=\{1,2,3,5\}$ and $B'=\emptyset$. Magma experiments show that $\phi(\C_{L})$ is a $[196,7,96]$ linear code over ${\mathbb F}_{2}$ with the weight enumerator $1+30z^{96}+60z^{97}+32z^{98}+4z^{113}+z^{128}$, which is consistent with our result in Theorem \ref{thm3-cor1}. This code is distance-optimal by the Griesmer bound and it is optimal due to \cite{GMB}. 
\end{example}

By a in-depth study on a special case of Theorem \ref{code-4}, we can obtain a class of two-weight optimal linear codes in the following.
\begin{thm} \label{thm4-cor1}
Let $\C_{L}$ be defined as in Theorem \ref{code-4} with $B=B'$ (i.e., $(\Delta_{B}\setminus \Delta_{B'})^c=\fqm$). Then the Gray image $\phi(\C_{L})$ has parameters $[2(q^{2m}-q^{m+|A|}),2m,2(q-1)(q^{2m-1}-q^{m+|A|-1})]$ with the weight enumerator $1+(q^{m-|A|}-1)z^{2(q-1)q^{2m-1}}+(q^{2m}-q^{m-|A|})z^{2(q-1)(q^{m-1}-q^{|A|-1})q^{m}}$. This code $\phi(\C_{L})$ is a near Griesmer code and it is distance-optimal.
\end{thm}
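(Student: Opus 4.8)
The plan is to deduce everything from Theorem~\ref{code-4} by specializing to $B=B'$. In that case $\Delta_B\setminus\Delta_{B'}=\emptyset$, so $(\Delta_B\setminus\Delta_{B'})^c=\fqm$ and $L=\Delta_A^c+u\fqm$; hence $\C_L$ has length $(q^m-q^{|A|})q^m=q^{2m}-q^{m+|A|}$ and size $q^{2m}$, and since $\phi$ is an $\fq$-linear isometry, $\phi(\C_L)$ is a $[\,2(q^{2m}-q^{m+|A|}),\,2m\,]$ code over $\fq$. Next I would substitute $|B'|=|B|$, hence also $|A\cup B'|=|A\cup B|$, into the weight table of Theorem~\ref{code-4}: each of $A_{w_7},A_{w_8},A_{w_9},A_{w_{10}}$ carries the factor $q^{m-|A\cup B'|}-q^{m-|A\cup B|}=0$ (or the combination $q^{m-|B'|}-q^{m-|B|}-q^{m-|A\cup B'|}+q^{m-|A\cup B|}=0$), so those four weights disappear, and a direct substitution gives $w_2=w_3=w_4=w_5=w_6=2(q-1)(q^{2m-1}-q^{m+|A|-1})=2(q-1)(q^{m-1}-q^{|A|-1})q^m$, while $w_1=2(q-1)q^{2m-1}$ is the only remaining weight. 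Using the identity $\sum_{i=0}^{10}A_{w_i}=q^{2m}$ from Theorem~\ref{code-4} together with $A_0=1$ and $A_{w_1}=q^{m-|A|}-1$, the total multiplicity of the merged weight is forced to be $q^{2m}-q^{m-|A|}$, which is exactly the claimed two-weight enumerator; since $A\subset[m]$ forces $|A|<m$, the multiplicity $q^{m-|A|}-1$ is positive and both nonzero weights genuinely occur.

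Second, I would fix the minimum distance. Since $q^{m+|A|-1}\ge 1$ we have $w_1>w_2$, so $d=2(q-1)(q^{2m-1}-q^{m+|A|-1})=2(q-1)q^{m+|A|-1}(q^{m-|A|}-1)$, which is positive because $|A|<m$; this confirms the parameter triple. Writing $n=2(q^{2m}-q^{m+|A|})$, to show $\phi(\C_L)$ is a near Griesmer code I would evaluate $g(2m,d)=\sum_{i=0}^{2m-1}\lceil d/q^i\rceil$ by splitting the range at $i=m+|A|$, exactly in the spirit of the proof of Theorem~\ref{thm1-cor1}: for $0\le i\le m+|A|-1$ the summand $d/q^i=2(q-1)(q^{2m-1-i}-q^{m+|A|-1-i})$ is a nonnegative integer; for $i=m+|A|$ it equals $2(q-1)q^{m-|A|-1}-\tfrac{2(q-1)}{q}$, whose ceiling is $2(q-1)q^{m-|A|-1}-1$ because $1\le \tfrac{2(q-1)}{q}<2$; and for $m+|A|+1\le i\le 2m-1$ the subtracted quantity $2(q-1)q^{m+|A|-1-i}$ lies strictly in $(0,1)$, so the ceiling is $2(q-1)q^{2m-1-i}$. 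Summing the resulting geometric series and folding in the single $-1$ correction should give $g(2m,d)=2(q^{2m}-q^{m+|A|})-1=n-1$, i.e. $\phi(\C_L)$ is a near Griesmer code.

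Finally, for distance-optimality I would run the same computation for $d+1$. Adding $q^{-i}$ to each summand: at $i=0$ it becomes $d+1$ (a change of $+1$); for $1\le i\le m+|A|-1$ each ceiling increases by $1$; at $i=m+|A|$ the ceiling is unchanged when $q\ge 3$ but increases by $1$ when $q=2$; and for $i>m+|A|$ nothing changes. Hence $g(2m,d+1)=(n-1)+(m+|A|)+\varepsilon$ with $\varepsilon=1$ if $q=2$ and $\varepsilon=0$ otherwise, so $g(2m,d+1)>n$; consequently no $[\,n,2m,d+1\,]$ code can exist and $\phi(\C_L)$ is distance-optimal. I expect the only real obstacle to be the bookkeeping of the ceiling functions across the three ranges $i\le m+|A|-1$, $i=m+|A|$, $i>m+|A|$ — in particular the single boundary term and the slightly different behaviour at $q=2$, and checking that the smallest parameter cases are handled correctly — while everything else is a routine specialization of Theorem~\ref{code-4} followed by a finite geometric sum.
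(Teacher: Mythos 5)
Your proposal is correct and follows essentially the same route as the paper: specialize Theorem \ref{code-4} to $B=B'$ so that the ten Lee weights collapse to the two values $2(q-1)q^{2m-1}$ and $2(q-1)(q^{2m-1}-q^{m+|A|-1})$, then evaluate $g(2m,d)=2(q^{2m}-q^{m+|A|})-1$ and $g(2m,d+1)=2(q^{2m}-q^{m+|A|})+m+|A|+\lfloor 2/q\rfloor-1$ via the Griesmer bound, exactly as the paper does (your three-range ceiling bookkeeping is just the detail the paper suppresses). The only caveat, which you share with the paper, is that the strict inequality $g(2m,d+1)>n$ fails in the degenerate case $m=1$, $A=\emptyset$, $q\ge 3$, where $g(2m,d+1)=n$; in all other cases your argument is a faithful, more detailed version of the paper's proof.
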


\begin{proof}
By Theorem \ref{code-4}, it can be verified directly that $\phi(\C_{L})$ is reduced to a two-weight code with the weight enumerator $1+(q^{m-|A|}-1)z^{2(q-1)q^{2m-1}}+(q^{2m}-q^{m-|A|})z^{2(q-1)(q^{m-1}-q^{|A|-1})q^{m}}$ for the case $B=B'$. Certainly, the parameters of $\phi(\C_{L})$ can be given by $[2(q^{2m}-q^{m+|A|}),2m,2(q-1)(q^{2m-1}-q^{m+|A|-1})]$. 

By the Griesmer bound, one has 
\[g(2m,d)=\sum_{i=0}^{2m-1} \lceil \frac{2(q-1)(q^{2m-1}-q^{m+|A|-1})}{q^i}\rceil=2(q^{2m}-q^{m+|A|})-1\]
and
\[g(2m,d+1)=\sum_{i=0}^{2m-1} \lceil \frac{2(q-1)(q^{2m-1}-q^{m+|A|-1})+1}{q^i}\rceil=2(q^{2m}-q^{m+|A|})+m+|A|+\lfloor \frac{2}{q}\rfloor-1.\]
Therefore the code $\phi(\C_{L})$ is a near Griesmer code and it is  distance-optimal.
This completes the proof.
\end{proof}

\begin{remark}
It should be noted that for the case $|A|=|B|=m$ the codes in Theorem \ref{thm1-cor1} have parameters $[2(q^{2m}-q^{m+|B'|}),2m,2(q-1)(q^{2m-1}-q^{m+|B'|-1})]$, which is the same as the parameters of the codes in Theorem \ref{thm4-cor1}. However, they are inequivalent since they have different weight distributions. The weight enumerator of the code in Theorem \ref{thm1-cor1} for the case $|A|=|B|=m$ can be given by $1+(q^{2m}-2q^{m-|B'|}+1)z^{2(q-1)q^{m-1}(q^{m}-q^{|B'|})}+2(q^{m-|B'|}-1)z^{(q-1)q^{m-1}(2q^{m}-q^{|B'|})}$, which is different from that of Theorem \ref{thm4-cor1}.
\end{remark}

\begin{example}
Let $q=2$, $m=4$, $A=\{1,2\}$ and $B=B'$. Magma experiments show that $\phi(\C_{L})$ is a $[384,8,192]$ linear code over ${\mathbb F}_{2}$ with the weight enumerator $1+252z^{192}+3z^{256}$, which is consistent with our result in Theorem \ref{thm4-cor1}. This code is distance-optimal by the Griesmer bound. 
\end{example}

\section{Conclusions} \label{sect-conclusion}
In this paper, we constructed four families of linear codes over $\fq+u\fq$, $u^2=0$ with defining sets associated with simplicial complexes of $\fq^m$. This extends the results of \cite{YWXZQY} from $q=2$ and $B=[m]$ to general $q$ and $B\subseteq [m]$. By computing certain exponential sums on simplicial complexes, we completely determined the parameters and Lee weight distributions of these four families of codes, and many linear codes with few Lee weights can be produced. Moreover, via the Gray map, we obtained several infinite families of optimal linear codes over $\fq$ by using the Griesmer bound, which include the near Griesmer codes and distance-optimal codes.

\section{Acknowledgements}
This work was supported by the Major Program(JD) of Hubei Province (No. 2023BAA027), the National Natural Science Foundation of China (No. 62072162), the Natural Science Foundation of Hubei Province of China (No. 2021CFA079), the Knowledge Innovation Program of Wuhan-Basic Research (No. 2022010801010319) and the innovation group project of the natural science foundation of Hubei Province of China (No. 2023AFA021).



\begin{thebibliography}{99}

\bibitem{ARJD} R. J. Anderson, C. Ding, T. Hellsesth, T. Klove, How to build robust shared control systems, Des. Codes Cryptogr. 15(2) (1998), pp. 111-123.


\bibitem{CAGJ} A. R. Calderbank, J. Goethals, Three-weight codes and association schemes, Philips J. Res. 39(4-5) (1984), pp. 143-152.

\bibitem{CCDY} C. Carlet, C. Ding, J. Yuan, Linear codes from perfect nonlinear mappings and their secret sharing schemes, IEEE Trans. Inf. Theory 51(6) (2005), pp. 2089-2102.


\bibitem{CHJY} S. Chang, J.Y. Hyun, Linear codes from simplicial complexes, Des. Codes Cryptogr. 86 (2018), pp. 2167-2181.

\bibitem{DING} C. Ding, Linear codes from some $2$-designs, IEEE Trans. Inf. Theory 61(6) (2015), pp. 3265-3275.

\bibitem{DCHT} C. Ding, T. Helleseth, T. Kl{\o}ve, X. Wang, A generic construction of cartesian authentication codes, IEEE Trans. Inf. Theory 53(6) (2007), pp. 2229-2235.

\bibitem{DN} C. Ding, H. Niederreiter, Cyclotomic linear codes of order $3$, IEEE Trans. Inf. Theory 53(6) (2007), pp. 2274-2277.

\bibitem{DW} C. Ding, X. Wang, A coding theory construction of new systematic authentication codes, Theor. Comput. Sci. 330(1) (2005), pp. 81-99.

\bibitem{GMB}M. Grassl, Bounds on the minimum distance of linear codes, Online available at http://www.codetables.de, Accessed on 2024-06-21

\bibitem{JHG} J.H. Griesmer, A bound for error correcting codes, IBM J. Res. Dev. 4 (1960), pp. 532-542.

\bibitem{HZHZ} Z. Heng, Projective linear codes from some almost difference sets, IEEE Trans. Inf. Theory 69(2) (2023), pp. 978-994.

\bibitem{HDWZ} Z. Heng, C. Ding, W. Wang, Optimal binary linear codes from maximal arcs, IEEE Trans. Inf. Theory 66(9) (2020), pp. 5387-5394.

\bibitem{HCLZ} Z. Hu, B. Chen, N. Li, X. Zeng, Two classes of optimal few-weight codes over $\fq+u\fq$, In: S. Mesnager, Z. Zhou (eds) Arithmetic of Finite Fields. WAIFI 2022. LNCS, Springer, Cham, 13638 (2023): 208-220.

\bibitem{HXLZWT} Z. Hu, Y. Xu, N. Li, X. Zeng, L. Wang, X. Tang, New constructions of optimal linear codes from simplicial complexes, IEEE Trans. Inf. Theory 70(3) (2024), pp. 1823-1835. 

\bibitem{HWPV} W. Huffman, V. Pless, Fundamentals of error-correcting codes, Cambridge University Press (1997).

\bibitem{JYHLL} J. Y. Hyun, J. Lee, Y. Lee, Infinite families of optimal linear codes constructed from simplicial complexes, IEEE Trans. Inf. Theory 66(11) (2020), pp. 6762-6773.

\bibitem{LXSM} X. Li, M. Shi, A new family of optimal binary few-weight codes from simplicial complexes, IEEE Communications Letters 25(4) (2021), pp. 1048-1051.

\bibitem{LXYQ} X. Li, Q. Yue, D. Tang, A family of linear codes from constant dimension subspace codes, Des. Codes Cryptogr. 90 (2022), pp. 1-15.

\bibitem{HLYM} H. Liu, Y. Maouche, Two or few-weight trace codes over ${\mathbb{F}_{q}}+u{\mathbb{F}_{q}}$, IEEE Trans. Inf. Theory 65(5) (2019), pp. 2696-2703.

\bibitem{MQRT} S. Mesnager, Y. Qi, H. Ru, C. Tang, Minimal linear codes from characteristic functions, IEEE Trans. Inf. Theory 66(9) (2020), pp. 5404-5413.

\bibitem{MAS} S. Mesnager, A. S{\i}nak, Several classes of minimal linear codes with few weights from weakly regular plateaued functions, IEEE Trans. Inf. Theory, 66(4) (2020), pp. 2296-2310.

\bibitem{MSYGPS} M. Shi, Y. Guan, P. Sol\'{e}, Two new families of two-weight codes, IEEE Trans. Inf. Theory 63(10) (2017), pp. 6240-6246.

\bibitem{SMLX} M. Shi, X. Li, Two classes of optimal $p$-ary few-weight codes from down-sets, Discret. Appl. Math. 290 (2021), pp. 60-67.

\bibitem{MSYLPS} M. Shi, Y. Liu, P. Sol\'{e}, Optimal two-weight codes from trace codes over $\mathbb {F}_2+u\mathbb {F}_2$, IEEE Commun. Lett. 20(12) (2016), pp. 2346-2349.

\bibitem{SMWRLY} M. Shi, R. Wu, Y. Liu, P. Sol\'{e}, Two and three weight codes over $\fp+u\fp$, Cryptogr. Commun. 9 (2017), pp. 637-646.

\bibitem{GSJS} G. Solomon, J.J. Stiffer, Algebraically punctured cyclic codes, Inform. and Control 8 (1965), pp. 170-179.

\bibitem{WYHJ} Y. Wu, J.Y. Hyun, Few-weight codes over $\fp+u\fp$ associated with down sets and their distance optimal Gray image, Discret. Appl. Math. 283 (2020), pp. 315-322.

\bibitem{WLZX} Y. Wu, C. Li, L. Zhang, F. Xiao, Quaternary codes and their binary images, IEEE Trans. Inf. Theory 70(7) (2024), pp. 4759-4768. 

\bibitem{YWXZQY} Y. Wu, X. Zhu, Q. Yue, Optimal few-weight codes from simplicial complexes, IEEE Trans. Inf. Theory 66(6) (2020), pp. 3657-3663.

\bibitem{YLFL} M. Yang, J. Li, K. Feng, D. Lin, Generalized Hamming weights of irreducible cyclic codes, IEEE Trans. Inf. Theory 61(9) (2015), pp. 4905-4913.

\bibitem{ZZNL} Z. Zhou, N. Li, C. Fan, T. Helleseth, Linear codes with two or three weights from quadratic Bent functions, Des. Codes Cryptogr. 81 (2016), pp. 283-295.

\end{thebibliography}
\end{document}